\newtheorem{theorem}{Theorem}[section]
\newtheorem{lemma}[theorem]{Lemma}
\theoremstyle{definition}
\newtheorem{remark}[theorem]{Remark}
\newtheorem{proposition}[theorem]{Proposition}
\numberwithin{equation}{section}  
\begin{document}

\title[Spontaneous mass generation in a lattice NJL model]{Spontaneous mass generation and chiral symmetry breaking 
in a lattice Nambu--Jona-Lasinio model}

\author[Y. Goto]{Yukimi Goto\textsuperscript{1} }
\thanks{\textsuperscript{1}Kyushu University,~Faculty of Mathematics,~Nishi-ku,~Fukuoka~819-0395, Japan, Email:~{\tt   yukimi@math.kyushu-u.ac.jp}}

\author[T. Koma]{Tohru Koma\textsuperscript{2} }
\thanks{\textsuperscript{2}Gakushuin University (retired), Department of Physics, Mejiro, Toshima-ku, Tokyo 171-8588, Japan}

\maketitle

\medskip

\noindent
{\bf Abstract:} 
We study a lattice Nambu--Jona-Lasinio model with interacting staggered fermions in the Kogut--Susskind Hamiltonian formalism. 
The model has a discrete chiral symmetry but not the usual continuous chiral symmetry. 
In a strong coupling regime for the four-fermion interaction, we prove that the mass of the fermions is spontaneously generated  
at sufficiently low non-zero temperatures in the dimensions $\nu \ge 3$ of the model and zero temperature in $\nu \ge 2$. 
Due to the phase transition, the discrete chiral symmetry of the model is broken.
Our analysis is based on the reflection positivity for fermions and the method of the infrared bound.
\bigskip

 \section{Introduction}
The strongly interacting elementary particles have been historically called hadrons, 
which nowadays are thought to be the bound states of quarks. 
One of the main goals of particle physics is to explain the properties of hadrons by starting from quantum chromodynamics (QCD), 
which describes the dynamics of quarks and gluons (gauge bosons).
The masses of the {hadrons} are believed to be generated by the interaction between 
the quarks and the gluons, accompanied by the breakdown of the chirality of the massless quarks. 
Since the chiral symmetry is continuous, the so-called Nambu--Goldstone bosons are expected to appear \cite{Nambu,Goldstone,GSW}. 
These massless bosons are thought to be the mesons such as the $\pi$-meson, 
since the pion exhibits the very small experimental mass compared to other hadrons.

In order to deal with the chiral symmetry breaking, Nambu and Jona-Lasinio \cite{NJL,NJL2} proposed a fermion model 
with an effective four-fermion interaction 
which is induced by the original fermion-boson interaction, by relying on an interesting analogy to 
Bardeen--Cooper--Schrieffer (BCS) \cite{BCS} theory for superconductivity. 
Nowadays, it is called Nambu--Jona-Lasinio (NJL) model \cite{Hatsuda}, which has the chiral symmetry for the fermions. 


Since mathematically rigorous results are still rare for QCD, we should remark that Salmhofer and Seiler \cite{SS1,SS2} 
proved the chiral symmetry breaking for staggered lattice fermions \cite{Susskind,BRSKJSS} coupled to gauge fields in 
the strong coupling limit for the gauge fields. Although their results are restricted to the strong coupling limit, 
their proof shows that the effective four-fermion interactions play a crucial role for the emergence of 
the symmetry breaking states as Nambu pointed out \cite{Nambu}.

On the other hand, in condensed matter physics, a strong two-body fermion interaction such as Coulomb interactions 
is well known to induce a metal-insulator transition, which is called Mott transition. 
Namely, due to the strong interaction, there appears a mass gap above the Fermi level.   
In a mathematically rigorous manner, Lemberger and Macris~\cite{LM} dealt with a fermion model with 
a strong repulsive Coulomb interaction and proved 
the existence of a charge long-range order for a certain range of non-zero temperatures in the dimensions $\nu \ge 2$ 
by using the Peierls argument. Their results stimulated studies of fermion systems. 
Actually, the cluster expansion method developed by Kennedy and Tasaki \cite{KennedyTasaki} was extended to 
fermion systems \cite{DFF1, DFF2, Ueltschi, FBU}. Unfortunately, the method suffers from the fermion sign problem 
for treating the ground states.

In this paper, we study a staggered lattice fermion model \cite{Susskind,BRSKJSS} with a four-fermion interaction 
in the Kogut--Susskind Hamiltonian formalism \cite{KS} in the dimensions $\nu\ge 2$. 
The model has a discrete chiral $\mathbb{Z}_2$ symmetry\footnote{{This chiral $\mathbb{Z}_2$ symmetry 
seems to be justified only 
in the low energy regime. See Appendices~\ref{staggeredfermion} and \ref{Sec:chiralint} for the details.}} 
when the mass parameter of the mass term in the Hamiltonian 
is equal to zero. Conversely, when the mass term is nonvanishing, the model does not have the chiral symmetry.  
Thus, the mass term itself is the order parameter for the phase transition.  
We prove that the phase transition occurs for the strong coupling of the four-fermion interaction at zero temperature 
in the dimensions $\nu\ge 2$ and low temperatures in the dimensions $\nu\ge 3$. 
In other words, there occurs spontaneous mass generation in the sense that the expectation value of the mass term 
takes non-trivial values for the symmetry breaking states in the infinite-volume limit. 
In order to prove these statements, we use the reflection positivity for fermion systems \cite{Koma4} 
which is an extension of the reflection positivity for spin systems \cite{DLS}. 
In fact, the staggered fermion model has the reflection positivity. 


We should remark on the pros and cons in the staggered lattice fermion formalism. 
As is well-known~\cite{Rothe}, in the staggered fermion formalism, there generally appear multiple flavors of the massless Dirac fermions, 
and they cannot be reduced to a single flavor. This is nothing but the fermion doubling problem. 
Clearly, there arises the problem of deriving the correct anomaly for a single flavor. 
In order to solve this problem, the so-called rooting procedure was introduced into QCD computations~\cite{Kronfeld}. 
However, this method seems to be still controversial.

{Another way to avoid multiple flavors is to introduce an extra term, such as the Wilson term, into the system 
so that it changes undesired massless Dirac fermions to massive ones. 
Although Wilson's formulation eliminates the fermion doubling, the extra term breaks explicitly the chiral symmetry 
even in the massless case. This lattice effect is stronger than that of the staggered fermion formalism. 
Therefore, the advantage of using the staggered fermions is that one can easily study the mass generation, 
in contrast to the Wilson fermions. In particular, it enables us to use non-perturbative approaches as in the present paper. 
On the other hand, it is known that the Wilson fermions reproduce the above chiral anomaly 
in the continuum limit~\cite{KSm}.
}


Our model can be interpreted also as a model in condensed matter physics. 
As is well known, the staggered lattice fermion model is equivalent to the fermion model having 
$\pi$ flux \cite{LiebFlux,LN,MN} for all the unit plaquettes.\footnote{See, e.g., \cite{Koma5}.} 
Besides, the strong four-fermion (Coulomb) interaction induces a charge density wave 
and the particle-hole symmetry at the half-filling is spontaneously broken in the infinite-volume limit. 
This is nothing but Mott transition, i.e., a metal-insulator transition. 
Unfortunately, we cannot prove the existence of the mass gap above the ground states, 
i.e., the change of the massless fermions to the massive ones. 
In passing, the charge density operator corresponds to the mass term of the fermion Hamiltonian in QCD.  

{In the case of the weak coupling regime, one can treat the four-fermion interactions as a perturbation for 
the free fermion Hamiltonian. Clearly, the unperturbed ground state is given by the unique Dirac sea.
Let us require that the perturbative series is convergent with a convergence radius. 
Then, the uniqueness of the ground state implies that the expectation value of the order parameter is order by order vanishing. 
Therefore, a non-vanishing result can be found only outside the convergence radius.} 


In order to take the continuum limit of our model, 
we recall a standard procedure which relies on the critical behavior 
of the phase transition. To use this procedure, we require that 
the mass which is determined by the two-point correlation 
approaches to zero when the coupling constant $g$ of the four-fermion interaction 
approaches to the critical value $g_c$ in the Mott order phase. 
Clearly, this mass is conceptually different from the expectation value  
{of the mass observable} in the above explanation of the mass generation for {fermions}. 
Under this assumption, we vary the distance scale in the two-point correlation 
so that the mass times the length scale is kept to a non-vanishing finite value 
as $g\rightarrow g_c$. {In addition, we may have to also control the mass parameter $m$ 
so that the expectation value {of the mass observable}  
is nonvanishing in the continuum limit.\footnote{{It is very hard to justify these assumptions about the continuum limit.
In fact, that is not the aim of the present paper.}}}
Since there exists a weak$^\ast$-limit for any sequence of states, 
one can construct the continuum limit by this procedure. 
However, a reader might think that there is no such a continuum limit 
because Nambu--Jona-Lasinio model is well known to be unrenormalizable. 
Of course, the state so obtained may be an undesirable one, e.g., 
showing non-interacting fermions. But, if we can obtain a continuum limit 
even in this trivial case for the interaction, then the chiral symmetry of the fermions 
is broken in the sense of the mass generation for the usual Dirac fermions. 
{In this trivial case, a reader might ask a question: What symmetry is spontaneously broken? 
As mentioned above, the interaction induces a long-range order of a charge density wave with the spontaneous breakdown 
of the particle-hole symmetry. 
Since the charge density operator corresponds to the mass operator of {the fermions}, 
the long-range order of the charge density wave induces a non-trivial change for the expectation value of 
the mass operator of {the fermions}. 
Although the interaction becomes trivial in the continuum limit, the change remains non-zero in the limit by the above assumption. 
Namely, it appears as the mass generation for the {fermions}. 
Therefore, the chiral symmetry is broken in the continuum limit,  
although the particle-hole symmetry is no longer visible in the limit.
}

We should also remark some numerical studies for both of quantum chromodynamics and condensed matter physics as follows. 

As is well known, there have been done many numerical simulations of lattice QCD even restricted to the present subject. 
Among them, let us focus on a recent study \cite{Aoki} whose results are closely related to ours. 
They studied two-flavor QCD, and concluded that
the signal of chiral susceptibility is dominated by the axial U(1) breaking effect rather than that of
${\rm SU(2)}_L\times {\rm SU(2)}_R$ at high temperatures. 

On the other hand, as mentioned above, Mott transition occurs due to the strong Coulomb interactions between fermions. 
Clearly, the Coulomb interactions are induced by U(1) gauge fields. 
Notice that the particle-hole symmetry is a discrete $\mathbb{Z}_2$ symmetry and 
that U(1) symmetry is not broken in the phase transition. Therefore, there appears no Nambu--Goldstone mode. 
These observations imply that U(1) itself or {the} U(1) part of a generic symmetry plays an important role 
for mass generation phenomena without continuous symmetry breaking. 
 
An early numerical study about Mott transition was given by \cite{GSST}.  
For recent fermion sign free Monte Carlo studies, see \cite{LJY1,LJY2}. They showed the positivity of the weight for 
Monte Carlo simulation by using a Majorana representation of the complex fermions. 
The idea is very similar to the reflection positivity for Majorana fermions \cite{JP}. 
In particular, the second paper \cite{LJY2} treated a charge density wave transition on the $\pi$-flux square lattice. 
As mentioned above, the hopping Hamiltonian is equivalent to that of the staggered fermions. 
Therefore, the corresponding system has reflection positivity. 
{From} these numerical results, one can expect that Mott transition occurs for a much wider class of fermion systems 
than the staggered fermions beyond the restriction of reflection positivity.   

The present paper is organized as follows: In Sec.~\ref{ModelResult}, we describe our model, 
and present the precise statements of our main theorems. Our strategy to prove the main theorems is given 
in Sec.~\ref{RPGD}, i.e., we use reflection positivity for fermions and prove a Gaussian domination bound.  
In Sec.~\ref{LROnonzero}, the existence of the long-range order is proved at non-zero temperatures in 
the dimensions $\nu\ge 3$. The case at zero temperature in the dimensions $\nu\ge 2$ is proved in Sec.~\ref{LROzero}. 
A short review for staggered fermions is presented in Appendix~\ref{staggeredfermion}. 
In Appendix~\ref{Sec:chiralint}, we show that the interaction Hamiltonian of our model has a discrete chiral symmetry. 
A formal infinite-volume and continuum limit is discussed in Appendix~\ref{Dlimit}.

\section{Models and Results}
\label{ModelResult}
In this section, we describe the Hamiltonian which we will consider, 
and present the precise statements of our main results. 

Let us consider the Kogut--Susskind Hamiltonian, which takes a form of the Nambu--Jona-Lasinio (NJL) type.
Let $\Lambda := [-L+1, L]^\nu$ be the $\nu-$dimensional lattice with a positive integer $L$ and $\nu \ge 2$.
We impose the periodic boundary condition for the lattice, i.e., 
we will identify $(L+1, x^{(2)}, \dots, x^{(\nu)}) = (-L+1, x^{(2)}, \dots, x^{(\nu)})$, etc.
For each $x = (x^{(1)}, \dots, x^{(\nu)}) \in \Lambda$ we introduce a fermionic operator $\psi(x)$ which satisfies 
the anti-commutation relations, 
\begin{equation}
\label{anticommupsi}
\{\psi(x), \psi^\dagger(y)\} = \delta_{x, y}, \quad \{\psi(x), \psi(y)\}=0, 
\end{equation}
for $x, y \in \Lambda$. The staggered fermion Hamiltonian which we will consider is given by 
\begin{align}
\label{HamLambdam}
H^{(\Lambda)}(m)& := i\kappa \sum_{x \in \Lambda} \sum_{\mu=1}^\nu (-1)^{\theta_\mu(x)}
[\psi^\dagger(x) \psi(x+e_\mu) - \psi^\dagger(x+e_\mu)\psi(x)] \nonumber\\
&\quad+ m \sum_{x \in \Lambda}(-1)^{\sum_{\mu=1}^\nu x^{(\mu)}} \rho(x) 
+g\sum_{x \in \Lambda} \sum_{\mu=1}^\nu \rho(x) \rho(x+e_\mu),
\end{align}
where $\rho(x)$ is the charge density operator at the site $x$ which is defined by the deviation from the half filling, i.e., 
\begin{equation}
\label{defrho}
\rho(x) := \psi^\dagger(x)\psi(x) -\frac{1}{2}, 
\end{equation}
and the staggered {phases} of the hopping Hamiltonian are determined by 
\begin{equation*}
\theta_1(x):=
\begin{cases}
0 & \mbox{for \ } x^{(1)}\ne L;\\
1 & \mbox{for \ } x^{(1)}=L,
\end{cases}
\end{equation*}
and for $\mu=2,3,\ldots,\nu$, 
\begin{equation*}
\theta_\mu(x):=
\begin{cases}
x^{(1)}+\cdots+x^{(\mu-1)} & \mbox{for \ } x^{(\mu)}\ne L;\\
x^{(1)}+\cdots+x^{(\mu-1)}+1 & \mbox{for \ } x^{(\mu)}= L;\\
\end{cases}
\end{equation*}
the three parameters satisfy $\kappa\in \mathbb{R}$, $m\in \mathbb{R}$ and $g\ge 0$, 
and $e_\mu$ is the unit vector whose $\mu$-th component is $1$. 
Here, the additional $+1$ in the case of $x^{(\mu)}=L$ in $\theta_\mu(x)$ implies the anti-periodic boundary condition. 
For the reflection positivity of the system, we need this condition. 

Clearly, the third sum in the right-hand side of (\ref{HamLambdam}) is the interaction term, 
which is the repulsive Coulomb interaction because of the assumption $g\ge 0$. 
The second sum is the mass term of the Hamiltonian $H^{(\Lambda)}(m)$. 
As shown in Appendices~\ref{staggeredfermion} and \ref{Sec:chiralint}, in a formal continuum limit, 
this term yields the usual mass terms, $\hat{\Psi}_{\rm u}^\dagger\gamma_0\hat{\Psi}_{\rm u}$ 
and $\hat{\Psi}_{\rm d}^\dagger\gamma_0\hat{\Psi}_{\rm d}$, for the four-component Dirac spinors, 
$\hat{\Psi}_{\rm u}$ and $\hat{\Psi}_{\rm d}$, of 
the u- and d-quarks in the Hamiltonian density in the spatial dimensions $\nu=3$. Here, $\gamma_0$ is the usual Dirac's gamma matrix.

We define an order parameter, 
\begin{equation}
\label{O}
O^{(\Lambda)}:=\sum_{x \in \Lambda}(-1)^{\sum_{\mu=1}^\nu x^{(\mu)}} \rho(x), 
\end{equation}
which measures the long-range order of the charge-density wave as mentioned in Introduction. 
Clearly, this is also nothing but the mass operator in the mass term of the Hamiltonian $H^{(\Lambda)}(m)$. 
As we will see below, the mass parameter $m$ plays the role of the infinitesimally weak external field 
to induce a spontaneous magnetization 
for the order parameter $O^{(\Lambda)}$ in the infinite-volume limit.

Next, in order to see that the Hamiltonian $H^{(\Lambda)}(m)$ of (\ref{HamLambdam}) with $m=0$ has a particle-hole symmetry, 
we introduce \cite{FILS} 
\begin{equation}
\label{uxsigma}
u(x):=\left[\prod_{\substack{y\in\Lambda,\;  :\; y\ne x\; }}
(-1)^{n(y)}\right][\psi^\dagger(x)+\psi(x)], 
\end{equation}
where we have written  
\begin{equation*}
n(x):=\psi^\dagger(x)\psi(x)
\end{equation*}
for $x\in\Lambda$. Then, one has 
\begin{equation*}
[u(x)]^\dagger \psi(y)u(x)=
\begin{cases}
\psi^\dagger(x), & \mbox{for}\; y=x ;\\
\psi(y), & \mbox{otherwise}.
\end{cases}
\end{equation*}
By using these operators, we define the particle-hole transformation by \cite{FILS}  
\begin{equation*}
\label{defU2}
U_{\rm PH}^{(\Lambda)}:=\prod_{x\in\Lambda}u(x).
\end{equation*}
Immediately, 
\begin{equation*}
\label{UPH}
(U_{\rm PH}^{(\Lambda)})^\dagger \psi(x)U_{\rm PH}^{(\Lambda)}=\psi^\dagger(x)
\end{equation*}
for all $x\in \Lambda$. 
By using this, the anticommutation relations (\ref{anticommupsi}) and the definition (\ref{defrho}) of the operator $\rho(x)$, 
one has 
\begin{eqnarray*}
(U_{\rm PH}^{(\Lambda)})^\dagger[\psi^\dagger(x)\psi(x+e_\mu)-\psi^\dagger(x+e_\mu)\psi(x)]U_{\rm PH}^{(\Lambda)}
&=&\psi(x)\psi^\dagger(x+e_\mu)-\psi(x+e_\mu)\psi^\dagger(x)\nonumber\\
&=&\psi^\dagger(x)\psi(x+e_\mu)-\psi^\dagger(x+e_\mu)\psi(x)\nonumber
\end{eqnarray*}
and 
\begin{equation}
\label{rhotrans}
(U_{\rm PH}^{(\Lambda)})^\dagger\rho(x)U_{\rm PH}^{(\Lambda)}=-\rho(x).
\end{equation}
When the mass parameter $m$ of the Hamiltonian $H^{(\Lambda)}(m)$ of (\ref{HamLambdam}) is equal to zero, 
these imply the invariance of the Hamiltonian $H^{(\Lambda)}(0)$ under the particle-hole transformation, i.e., 
\begin{equation}
\label{Haminv}
(U_{\rm PH}^{(\Lambda)})^\dagger H^{(\Lambda)}(0)U_{\rm PH}^{(\Lambda)}=H^{(\Lambda)}(0).
\end{equation}

We write 
\begin{equation}
\label{expectationbetam}
\langle\cdots\rangle_{\beta,m}^{(\Lambda)}:=\frac{1}{Z_{\beta,m}^{(\Lambda)}}
\mathrm{tr} (\cdots)e^{-\beta H^{(\Lambda)}(m)}
\end{equation}
for the expectation value, where $\beta$ is the inverse temperature and 
the partition function $Z_{\beta,m}^{(\Lambda)}=\mathrm{tr} e^{-\beta H^{(\Lambda)}(m)}$.  
{From} (\ref{rhotrans}) and (\ref{Haminv}), one has 
\begin{equation}
\label{rhoexpec}
\langle\rho(x)\rangle_{\beta,m=0}^{(\Lambda)}=0.
\end{equation}
for the parameter $m=0$. Our goal is to prove that this symmetry is spontaneously broken, i.e., 
the order parameter $O^{(\Lambda)}$ 
which is defined in (\ref{O}) shows a non-trivial expectation value {with respect to a certain state}   
in the infinite-volume limit $\Lambda \nearrow \mathbb Z^\nu$.  

For this purpose, we first define the long-range order parameter by
\[
m_\mathrm{LRO}^{(\Lambda)}(\beta) :=  \vert\Lambda\vert^{-1}\sqrt{\langle [O^{(\Lambda)}]^2  \rangle_{\beta,  0}^{(\Lambda)}}.
\]
Additionally,
\[
m_\mathrm{LRO}(\beta) := \lim_{\Lambda \nearrow \mathbb Z^\nu}m_\mathrm{LRO}^{(\Lambda)}(\beta).
\]
When the quantity $m_\mathrm{LRO}(\beta)$ is non-vanishing, there exists the long-range order 
in the infinite-volume.
A standard way to show the spontaneous symmetry breaking is to prove the existence of long-range order 
in the system \cite{Griffiths}. We write 
\begin{equation}
\langle\cdots\rangle_{\beta,0}:=\lim_{\Lambda \nearrow \mathbb Z^\nu}\langle \cdots\rangle_{\beta,  0}^{(\Lambda)}.
\end{equation}
Here, if necessary, we choose a sequence of finite lattices $\Lambda$ so that the state is well-defined in the infinite-volume limit.
Clearly, from (\ref{rhoexpec}) and the definition (\ref{O}) of the order parameter $O^{(\Lambda)}$, 
this state does not show a non-vanishing magnetization for the present order parameter.  
One of the physically natural states is given by 
\begin{equation}
\langle\cdots\rangle_{\beta}:=\lim_{m\searrow 0}\lim_{\Lambda \nearrow \mathbb Z^\nu}\langle \cdots\rangle_{\beta,  m}^{(\Lambda)}.
\end{equation}
Namely, the infinitesimally weak external field is applied to the system so that the corresponding magnetization is induced.

Once one shows the existence of long-range order, the Koma--Tasaki theorem~\cite{KT} yields 
that {the physically natural state exhibits the spontaneous symmetry breaking.
In fact, the staggered magnetization, $m_s$, is non-zero in the following sense:
\[
m_s
:=
\lim_{m\searrow 0}\lim_{\Lambda\nearrow\mathbb{Z}^\nu}\frac{1}{\vert\Lambda\vert}\left\langle O^{(\Lambda)}\right\rangle_{\beta, m}^{(\Lambda)}
=
\lim_{m\searrow 0}\lim_{\Lambda\nearrow\mathbb{Z}^\nu}\frac{1}{\vert\Lambda\vert}\sum_{x \in \Lambda}\left\langle (-1)^{\sum_{\mu=1}^\nu x^{(\mu)}}\rho(x)\right\rangle_{\beta, m}^{(\Lambda)} 
 \neq 0.
\]
See~\cite{KT} for more details.
}

Our main results are as follows: For the non-zero temperatures $\beta<\infty$, we obtain 

\begin{theorem}
\label{thm.chiral}
For the dimensions $\nu \ge 3$, there exist a positive number $\alpha_0$ small enough and a positive number $\beta_0$ such that 
$m_\mathrm{LRO}(\beta)>0$ for $\vert\kappa\vert/g \le \alpha_0$ and $\beta \ge \beta_0$.
\end{theorem}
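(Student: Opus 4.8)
The plan is to run the Dyson--Lieb--Simon / Fr\"ohlich--Israel--Lieb--Simon infrared-bound argument \cite{DLS,FILS}, based on the fermionic reflection positivity of \cite{Koma4}. Put $\tilde\rho(x):=(-1)^{\sum_{\mu}x^{(\mu)}}\rho(x)$, so that $O^{(\Lambda)}=\sum_{x\in\Lambda}\tilde\rho(x)$. Since the side length $2L$ is even, $\Lambda$ remains bipartite under the periodic identification and $x$, $x+e_\mu$ always lie on opposite sublattices; combined with $\rho(x)^{2}\equiv\tfrac14$ this gives $\bigl(\tilde\rho(x)-\tilde\rho(x+e_\mu)\bigr)^{2}=\bigl(\rho(x)+\rho(x+e_\mu)\bigr)^{2}=\tfrac12+2\rho(x)\rho(x+e_\mu)$, so that, up to an additive constant,
\begin{equation*}
H^{(\Lambda)}(0)=H^{(\Lambda)}_{\mathrm{hop}}+\frac{g}{2}\sum_{x\in\Lambda}\sum_{\mu=1}^{\nu}\bigl(\tilde\rho(x)-\tilde\rho(x+e_\mu)\bigr)^{2},
\end{equation*}
which, in the variable $\tilde\rho$, is of the standard reflection-positive ``ferromagnetic'' form with $O^{(\Lambda)}$ the relevant order parameter.

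First I would check that, at $m=0$, the finite-volume Gibbs state $\langle\cdot\rangle_{\beta,0}^{(\Lambda)}$ is reflection positive with respect to the reflections in the lattice hyperplanes through sites and through bonds: the staggered amplitudes $\theta_\mu$ and the antiperiodic boundary condition are chosen precisely so that the hopping term has the form required by \cite{Koma4}, while the rewritten interaction term and the vanishing mass term cause no trouble because $\rho(x)$ is even in the fermion parity. Then, introducing a real field $h=(h_x)_{x\in\Lambda}$ and the perturbed Hamiltonian
\begin{equation*}
H^{(\Lambda)}(h):=H^{(\Lambda)}_{\mathrm{hop}}+\frac{g}{2}\sum_{x\in\Lambda}\sum_{\mu=1}^{\nu}\bigl(\tilde\rho(x)-\tilde\rho(x+e_\mu)-h_x+h_{x+e_\mu}\bigr)^{2},
\end{equation*}
the iterated-reflection (chessboard) estimate gives the Gaussian domination bound $\mathrm{tr}\,e^{-\beta H^{(\Lambda)}(h)}\le\mathrm{tr}\,e^{-\beta H^{(\Lambda)}(0)}$ for all $h$. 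Expanding it to second order about $h=0$ — the linear term vanishes because $\langle\tilde\rho(x)\rangle_{\beta,0}^{(\Lambda)}=0$ by the particle--hole symmetry — and inserting plane-wave fields yields the infrared bound on the Duhamel two-point function,
\begin{equation*}
\bigl(\hat{\tilde\rho}_k,\hat{\tilde\rho}_{-k}\bigr)_{D}\le\frac{1}{2\beta g\,E_\nu(k)},\qquad E_\nu(k):=\sum_{\mu=1}^{\nu}(2-2\cos k_\mu),\qquad k\neq 0,
\end{equation*}
where $\hat{\tilde\rho}_k:=|\Lambda|^{-1/2}\sum_{x\in\Lambda}e^{ik\cdot x}\tilde\rho(x)$.

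Next I would pass from the Duhamel function to the static correlation through the Falk--Bruch inequality (see \cite{DLS}), which needs an upper bound on the double commutator $c_k:=\langle[\hat{\tilde\rho}_{-k},[H^{(\Lambda)}(0),\hat{\tilde\rho}_k]]\rangle_{\beta,0}^{(\Lambda)}$. Since $\hat{\tilde\rho}_k$ commutes with the interaction and (vanishing) mass terms, only $H^{(\Lambda)}_{\mathrm{hop}}$ survives in the commutator; it is linear in $\kappa$, and estimating the resulting fermion bilinears by their bounded operator norms gives $c_k\le C|\kappa|\sum_{\mu}(2+2\cos k_\mu)$ with $C$ a universal constant — it is only here that the strong-coupling hypothesis $|\kappa|/g\le\alpha_0$ is used. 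Because all the $\rho(x)$ commute, $\langle\hat{\tilde\rho}_k\hat{\tilde\rho}_{-k}\rangle$ needs no symmetrization, and $\langle\hat{\tilde\rho}_k\hat{\tilde\rho}_{-k}\rangle\le\tfrac12\sqrt{(\hat{\tilde\rho}_k,\hat{\tilde\rho}_{-k})_{D}\,c_k}+\tfrac{2}{\beta}(\hat{\tilde\rho}_k,\hat{\tilde\rho}_{-k})_{D}$ combines the two estimates into, for $k\neq 0$,
\begin{equation*}
\langle\hat{\tilde\rho}_k\hat{\tilde\rho}_{-k}\rangle_{\beta,0}^{(\Lambda)}\le\frac12\sqrt{\frac{C|\kappa|}{2\beta g}}\left(\frac{\sum_{\mu}(2+2\cos k_\mu)}{\sum_{\mu}(2-2\cos k_\mu)}\right)^{\!1/2}+\frac{1}{\beta^{2}g\,E_\nu(k)}.
\end{equation*}

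Finally I would close the argument with the sum rule $|\Lambda|^{-1}\sum_{k}\langle\hat{\tilde\rho}_k\hat{\tilde\rho}_{-k}\rangle=|\Lambda|^{-1}\sum_{x}\langle\tilde\rho(x)^{2}\rangle=\tfrac14$, which together with $m_{\mathrm{LRO}}^{(\Lambda)}(\beta)^{2}=|\Lambda|^{-1}\langle\hat{\tilde\rho}_0^{\,2}\rangle$ gives $m_{\mathrm{LRO}}^{(\Lambda)}(\beta)^{2}=\tfrac14-|\Lambda|^{-1}\sum_{k\neq0}\langle\hat{\tilde\rho}_k\hat{\tilde\rho}_{-k}\rangle$. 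Inserting the bound above and letting $\Lambda\nearrow\mathbb Z^{\nu}$, the two Riemann sums converge to finite integrals over $[-\pi,\pi]^{\nu}$; the second one, $\int(2\pi)^{-\nu}E_\nu(k)^{-1}\,d^{\nu}k$, is finite precisely because $\nu\ge 3$ (its integrand behaves like $|k|^{-2}$ near the origin), which is where the dimensional restriction enters. This yields $m_{\mathrm{LRO}}(\beta)\ge\bigl(\tfrac14-\tfrac12 J_\nu\sqrt{C|\kappa|/(2\beta g)}-I_\nu/(\beta^{2}g)\bigr)^{1/2}$ with finite constants $I_\nu,J_\nu$, which is strictly positive once $\beta$ is large enough and $|\kappa|/g$ small enough, and the long-range order then forces a non-vanishing expectation of $O^{(\Lambda)}$ in a symmetry-breaking infinite-volume state via the Koma--Tasaki theorem \cite{KT}, so the discrete chiral $\mathbb Z_{2}$ symmetry is spontaneously broken. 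The genuine obstacle, I expect, is the first part — establishing fermionic reflection positivity for this staggered Hamiltonian, where the exact form of $\theta_\mu$ and the antiperiodic boundary condition really matter, and pushing the chessboard estimate through to Gaussian domination in the fermionic setting; given these, the $\mathcal O(|\kappa|)$ bound on $c_k$, the Falk--Bruch inequality, and the convergence of the infrared integral for $\nu\ge3$ are routine.
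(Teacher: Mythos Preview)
Your overall architecture is exactly the paper's: rewrite the interaction as $\tfrac{g}{2}\sum(\tilde\rho(x)-\tilde\rho(x+e_\mu))^2$, establish fermionic reflection positivity and Gaussian domination (Section~\ref{RPGD}), pass to the infrared bound on the Duhamel function, convert to static correlations via the Dyson--Lieb--Simon inequality, bound the double commutator by $O(|\kappa|)$, and close with the sum rule from $\rho(x)^2=\tfrac14$. You also correctly flag that the only nontrivial input is the Gaussian domination for this staggered fermion Hamiltonian; the paper devotes the whole of Section~\ref{RPGD} to this, using specific unitary transformations and the Majorana reflection positivity of \cite{JP}.

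There is, however, a concrete error in your Falk--Bruch step. With the Duhamel function normalized as $\int_0^1 ds$ (so that your infrared bound reads $b\le 1/(2\beta g E_\nu(k))$), the correct DLS inequality \cite[Thm.~3.2, Cor.~3.2]{DLS} is
\[
\langle A^*A+AA^*\rangle \le \sqrt{\beta\,b\,c}\,\coth\!\sqrt{\beta c/(4b)}\le \sqrt{\beta\,b\,c}+2b,
\]
not $\tfrac12\sqrt{bc}+\tfrac{2}{\beta}b$ as you wrote. Plugging in $b\le 1/(2\beta g E_\nu(k))$ makes the first term $\beta$-\emph{independent}, of order $\sqrt{|\kappa|/g}$, while the second term is $O(1/(\beta g))$. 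This is precisely the paper's bound preceding the definition of $I_\nu,J_\nu$, and it is why the smallness of $|\kappa|/g$ is genuinely required: it controls the only term that survives as $\beta\to\infty$. Your displayed bound, by contrast, has both terms vanishing as $\beta\to\infty$, so it would give $m_{\mathrm{LRO}}(\beta)^2\to\tfrac14$ for \emph{every} value of $|\kappa|/g$ --- contradicting your own remark that ``it is only here that the strong-coupling hypothesis is used,'' and too good to be true. Once you restore the missing $\sqrt\beta$, the argument goes through and agrees with Section~\ref{LROnonzero}.
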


In the ground states, we can prove the existence of long-range order for the dimensions $\nu \ge 2$. 
We write 
\begin{equation}
\omega_0^{(\Lambda)}(\cdots):=\lim_{\beta \nearrow \infty}\langle\cdots\rangle_{\beta,0}^{(\Lambda)}. 
\end{equation}
Then, we obtain 

\begin{theorem}
\label{thm.GSLRO}
For the dimensions $\nu \ge 2$, there exists a positive number $\alpha_0$ small enough such that 
\[
\lim_{\Lambda \nearrow \mathbb{Z}^\nu}\frac{1}{\vert\Lambda\vert^2}\omega_0^{(\Lambda)}([O^{(\Lambda)}]^2)>0
\]
for $\vert\kappa\vert/g \le \alpha_0$.
\end{theorem}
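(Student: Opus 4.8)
\emph{Proof proposal.} The plan is to run the Dyson--Lieb--Simon infrared-bound argument in its ground-state form, feeding in the Gaussian domination bound of Section~\ref{RPGD}. Put $k_\ast:=(\pi,\dots,\pi)$ and, for each momentum $\mathbf k$ in the dual lattice, set $\hat\rho_{\mathbf k}:=|\Lambda|^{-1/2}\sum_{x\in\Lambda}e^{-i\mathbf k\cdot x}\rho(x)$, so that $O^{(\Lambda)}=|\Lambda|^{1/2}\hat\rho_{k_\ast}$. Since $n(x)^2=n(x)$ one has the operator identity $\rho(x)^2=\tfrac14$, and Parseval gives the sum rule
\begin{equation}
\label{eq:sr}
\frac{1}{|\Lambda|}\sum_{\mathbf k}\omega_0^{(\Lambda)}\bigl(\hat\rho_{\mathbf k}^\dagger\hat\rho_{\mathbf k}\bigr)=\frac14,
\qquad
\frac{1}{|\Lambda|^{2}}\,\omega_0^{(\Lambda)}\bigl([O^{(\Lambda)}]^{2}\bigr)=\frac{1}{|\Lambda|}\,\omega_0^{(\Lambda)}\bigl(\hat\rho_{k_\ast}^\dagger\hat\rho_{k_\ast}\bigr).
\end{equation}
Hence it suffices to show that the total contribution of the modes $\mathbf k\ne k_\ast$ to the left-hand sum in \eqref{eq:sr} stays strictly below $\tfrac14$ in the limit $\Lambda\nearrow\mathbb Z^\nu$.

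For the off-diagonal modes I would use two ingredients. First, the Gaussian domination bound of Section~\ref{RPGD}, which holds at every $\beta<\infty$ and survives $\beta\to\infty$, yields the infrared bound for the ground-state Duhamel (Bogoliubov) two-point function $(\,\cdot\,,\,\cdot\,)_\mathrm{D}$,
\[
(\hat\rho_{\mathbf k},\hat\rho_{\mathbf k})_\mathrm{D}\ \le\ \frac{C_1}{g\,\varepsilon_{\mathbf k}},\qquad \varepsilon_{\mathbf k}:=\sum_{\mu=1}^{\nu}\bigl(1+\cos k^{(\mu)}\bigr),
\]
with an absolute constant $C_1$, the weight $\varepsilon_{\mathbf k}$ coming from the Fourier transform of the Coulomb term and vanishing quadratically exactly at $\mathbf k=k_\ast$. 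Second, I would estimate the double commutator. Because $O^{(\Lambda)}$ and the interaction term are diagonal in the occupation basis and therefore commute with every $\rho(x)$, only the hopping term survives in $[H^{(\Lambda)}(0),\hat\rho_{\mathbf k}]$, and a short computation with the anticommutation relations gives
\[
\omega_0^{(\Lambda)}\bigl([\hat\rho_{\mathbf k}^\dagger,[H^{(\Lambda)}(0),\hat\rho_{\mathbf k}]]\bigr)=-\frac{2}{|\Lambda|}\sum_{\mu=1}^{\nu}\bigl(1-\cos k^{(\mu)}\bigr)\sum_{x\in\Lambda}(-1)^{\theta_\mu(x)}\,\omega_0^{(\Lambda)}\bigl(i\kappa[\psi^\dagger(x)\psi(x+e_\mu)-\psi^\dagger(x+e_\mu)\psi(x)]\bigr),
\]
which, using $|\omega_0^{(\Lambda)}(\psi^\dagger(x)\psi(y))|\le1$, is bounded by $C_2|\kappa|\,\delta_{\mathbf k}$ with $\delta_{\mathbf k}:=\sum_{\mu=1}^{\nu}(1-\cos k^{(\mu)})$. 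This is the only place where the smallness of $|\kappa|/g$ is used.

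Inserting these two bounds into the ground-state Falk--Bruch inequality $\omega_0^{(\Lambda)}(A^\dagger A)\le\bigl(\tfrac12(A,A)_\mathrm{D}\,\omega_0^{(\Lambda)}([A^\dagger,[H^{(\Lambda)}(0),A]])\bigr)^{1/2}$ with $A=\hat\rho_{\mathbf k}$ gives, for $\mathbf k\ne k_\ast$,
\[
\omega_0^{(\Lambda)}\bigl(\hat\rho_{\mathbf k}^\dagger\hat\rho_{\mathbf k}\bigr)\ \le\ C_3\sqrt{\frac{|\kappa|}{g}}\;\sqrt{\frac{\delta_{\mathbf k}}{\varepsilon_{\mathbf k}}}\,,
\]
so $\tfrac1{|\Lambda|}\sum_{\mathbf k\ne k_\ast}\omega_0^{(\Lambda)}(\hat\rho_{\mathbf k}^\dagger\hat\rho_{\mathbf k})$ is at most $C_3\sqrt{|\kappa|/g}$ times a Riemann sum converging, as $\Lambda\nearrow\mathbb Z^\nu$, to
\[
I_\nu:=\int_{[-\pi,\pi]^\nu}\frac{d^\nu k}{(2\pi)^\nu}\sqrt{\frac{\delta_{\mathbf k}}{\varepsilon_{\mathbf k}}}\,.
\]
The integrand is bounded away from $k_\ast$ and behaves like $\mathrm{const}/|\mathbf k-k_\ast|$ near $k_\ast$ (there $\delta_{\mathbf k}\to2\nu$ while $\varepsilon_{\mathbf k}\to0$ quadratically), so $I_\nu<\infty$ precisely when $\nu\ge2$; this is exactly where the dimensional restriction enters. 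Combined with \eqref{eq:sr},
\[
\liminf_{\Lambda\nearrow\mathbb Z^\nu}\frac{1}{|\Lambda|^{2}}\,\omega_0^{(\Lambda)}\bigl([O^{(\Lambda)}]^{2}\bigr)\ \ge\ \frac14-C_3\sqrt{\frac{|\kappa|}{g}}\,I_\nu,
\]
which is strictly positive once $\alpha_0$ is fixed so small that $C_3\sqrt{\alpha_0}\,I_\nu<\tfrac14$.

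The step I expect to be the main obstacle is the careful passage to the ground state. Gaussian domination and the resulting infrared bound are finite-$\beta$ statements, so one must check that $\beta\to\infty$ genuinely delivers a bound on the ground-state Duhamel function, and that Falk--Bruch is legitimate in the (possibly degenerate) ground-state sector: this is handled by working with the particle-hole symmetric ground state, for which $\omega_0^{(\Lambda)}(\hat\rho_{\mathbf k})=0$, so $\hat\rho_{\mathbf k}$ has no component in the zero-energy subspace. One also has to dispose of the mode $\mathbf k=0$, where $\hat\rho_0$ is proportional to the conserved total charge; there the double commutator vanishes ($\delta_0=0$), so that mode contributes nothing. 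Finally, interchanging $\lim_\Lambda$ with the mode sum near the singularity of $I_\nu$ requires a uniform-integrability argument of the type already needed for the non-zero-temperature case in Section~\ref{LROnonzero}.
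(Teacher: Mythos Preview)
Your proposal is correct and follows the same Dyson--Lieb--Simon infrared-bound strategy as the paper: sum rule from $\rho(x)^2=\tfrac14$, Gaussian domination $\Rightarrow$ infrared bound on the Duhamel function, double-commutator estimate (only the hopping term contributes), and the integral $\int d^\nu k/\sqrt{\varepsilon_{\mathbf k}}$ converges exactly for $\nu\ge 2$.

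The two differences are purely tactical. First, you keep the $p$-dependent factor $\delta_{\mathbf k}=\sum_\mu(1-\cos k^{(\mu)})$ in the double-commutator bound, whereas the paper simply uses the cruder operator-norm estimate $C_p\le 8|\kappa|\nu$; since $\delta_{\mathbf k}\le 2\nu$ and the singularity of the integrand is governed by $\varepsilon_{\mathbf k}$ alone, your refinement buys nothing for convergence. Second---and this is the more useful point---the paper avoids your entire final paragraph of worries (ground-state Duhamel, Falk--Bruch in a degenerate sector, etc.) by not working directly at zero temperature: it derives the finite-$\beta$ inequality
\[
\langle \tilde\rho_p\tilde\rho_{-p}+\tilde\rho_{-p}\tilde\rho_p\rangle_{\beta,0}^{(\Lambda)}\le \sqrt{\frac{C_p}{2gE_{p+Q}}}+\frac{1}{\beta g E_{p+Q}}
\]
from DLS Theorem~3.2/Corollary~3.2 and the infrared bound, combines it with the sum rule, and only then sends $\beta\to\infty$, whereupon the thermal term $1/(\beta gE_{p+Q})$ drops out and $\langle\cdot\rangle_{\beta,0}^{(\Lambda)}\to\omega_0^{(\Lambda)}$. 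This order of limits makes the passage to the ground state a one-line affair and renders the subtleties you flagged moot.
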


In order to obtain a symmetry breaking vacuum (ground state), we write 
\begin{equation}
\omega_{0,m}^{(\Lambda)}(\cdots):=\lim_{\beta\nearrow\infty}\langle\cdots\rangle_{\beta,m}^{(\Lambda)},
\end{equation}
where we recall that the expectation value in the right-hand side is given by (\ref{expectationbetam}). 
Then, the spontaneous magnetization for the order parameter $O^{(\Lambda)}$ is given by 
\begin{equation}
\lim_{m\searrow 0}\lim_{\Lambda\nearrow\mathbb{Z}^\nu}\frac{1}{\vert\Lambda\vert}\omega_{0,m}^{(\Lambda)}(O^{(\Lambda})=
\lim_{m\searrow 0}\lim_{\Lambda\nearrow\mathbb{Z}^\nu}\frac{1}{\vert\Lambda\vert}\sum_{x\in\Lambda}(-1)^{\sum_{\mu=1}^\nu x^{(\mu)}}
\omega_{0,m}^{(\Lambda)}(\psi^\dagger(x)\psi(x)). 
\end{equation}
Since the existence of the long-range order implies a non-vanishing spontaneous magnetization \cite{KT}, 
this right-hand side is non-vanishing. Thus, the expectation value of the mass term of the Hamiltonian 
is non-vanishing in the symmetry breaking vacuum (ground state). 
Naively, one can expect that the expectation value of the charge density operator $\psi^\dagger(x)\psi(x)$ 
takes the value $1/2$. But this result implies that there appears a non-trivial deviation from $1/2$. 
This is nothing but the emergence of a charge density wave order. 
In passing, the particle-hole symmetry is locally broken.

\begin{remark}
We compare the results of Salmhofer and Seiler~\cite{SS1} with ours. To begin with, let us recall 
the models and the results.
In~\cite{SS1}, they dealt with staggered fermions coupled to U(N) gauge fields on the hypercubic lattice 
in the Euclidean formalism with the Lagrangian density. In the present paper, we use the Hamiltonian formalism 
for the staggered fermions with the effective four-fermion interaction without gauge fields.
Therefore, our model corresponds to their U(1) case. Namely, 
among their settings~\cite{SS1}, we have treated only the corresponding U(1) case. 
In~\cite{SS1}, they took the strong coupling limit for the coupling constant of the self-interaction of the gauge fields. 
Because of the strong coupling limit, the integral over the gauge fields 
can be done, and the resulting action consists of the nearest-neighbor four-fermion interactions and the mass term 
without usual hopping terms. Both of the interaction and mass terms are equivalent to ours in the U(1) case, 
although their effective model loses hopping terms.

In~\cite{SS1}, they proved the chiral symmetry breaking in the sense of the non-vanishing of the expectation value of 
the mass observable when the space-time dimension is greater than or equal to four. As mentioned above, this mass observable is 
equivalent to our order parameter $O^{(\Lambda)}$ of (\ref{O}).

Apart from the models themselves and their results, 
other differences between~\cite{SS1} and ours are summarized as follows: 
\begin{enumerate}
\item In the U(1) case of \cite{SS1}, they treated the lattice gauge theory 
with $N_f=2^{\lfloor d/2\rfloor}$-flavors of Dirac fermions 
in the $d=\nu+1$ space-time dimensions: $(d, N_f)= (4, 4), (5, 4), (6, 8), (7, 8), \dots$. 
Here, ${\lfloor b \rfloor}:=\max\{\ell\in\mathbb{Z}\colon \ell\le b \}$ for a real number $b$. 
In our setting, the pairs of the number of flavors of Dirac fermions, which is given by $N_f=2^{\lfloor \nu/2\rfloor}$, 
and the space-time dimensions are $(d, N_f)= (3, 2), (4, 2), (5, 4), (6, 4), \dots$.
For instance, in the four space-time dimensions, there appear four flavors of the Dirac fermions 
because the time is also discretized in the Euclidean formulation which relies on Grassmann algebra.  
On the other hand, in our setting with three spatial dimensions, there are two-fermion flavors 
(see also Appendix~\ref{staggeredfermion}).
In other words, the discretization of time leads to additional fermions due to the doubling problem.
These differences are arise due to the difference between the Lagrangian and the Hamiltonian formalisms.
\item From a technical point of view, there is a considerable difference between the two models about the reflection positivity.
In~\cite{SS1}, the model can be reduced to a simple spin system with only nearest-neighbor spin-spin interactions, 
such as monomer-dimer systems, by using the special property of the U(N) integral. 
Thus, their method is based on the reflection positivity for spin systems.
In contrast to~\cite{SS1}, we need the reflection positivity \cite{JP} for fermion systems. 
\item Besides, we emphasize that the main difficulty that we have to overcome in our setting is to establish reflection positivity 
for the model including the hopping term.
Such a hopping term, which disappears in the resulting effective model of~\cite{SS1}, 
is an annoyance for proving reflection positivity even in spin systems. 
\end{enumerate}
\end{remark}

\section{Reflection Positivity and Gaussian Domination}
\label{RPGD}

The aim of this section is to prove the Gaussian domination bound which is given in Proposition~\ref{GaussianD} below, 
by using the reflection positivity of the present system. The bound leads to an estimate for the long-range order. 

Let $\Omega \subset \Lambda$ be a subset and $\mathcal{A}(\Omega)$ be the algebra generated by 
$\psi(x)$ and $\psi^\dagger(y)$ for  $x, y \in \Omega$.
Since our $\Lambda$ is symmetric with respect to a hyperplane, 
there are a natural decomposition $\Lambda = \Lambda_- \cup \Lambda_+$ with $\Lambda_- \cap \Lambda_+ = \emptyset$ 
and a reflection map $r: \Lambda_{\pm} \to \Lambda_{\mp}$ satisfying $r(\Lambda_\pm) = \Lambda_\mp$.
We write $\mathcal{A} = \mathcal{A}(\Lambda)$ and $\mathcal{A}_\pm = \mathcal{A}(\Lambda_\pm)$. 
The reflection has an anti-linear representation $\vartheta: \mathcal{A}_\pm \to \mathcal{A}_\mp$ requiring 
\begin{align*}
&\vartheta(\psi(x)) = \psi(\vartheta(x)), \quad \vartheta(\psi^\dagger(x))= \psi^\dagger(\vartheta(x)),\\
 &\vartheta(AB) = \vartheta(A)\vartheta(B), \quad \vartheta(A)^\dagger = \vartheta(A^\dagger) \quad \text{for } A, B \in \mathcal{A}.
 \end{align*}

For $x \in \Lambda $ we introduce Majorana fermion operators $\xi(x), \eta(x)$ by
\begin{equation}
\label{Majoranapsi}
\xi(x) := \psi^\dagger(x) + \psi(x), \quad \eta(x) := i(\psi^\dagger(x) - \psi(x)),
\end{equation}
or equivalently, 
\begin{equation}
\label{psiMajorana}
\psi(x) = \frac{1}{2}(\xi(x) + i\eta(x)), \quad \psi^\dagger(x) = \frac{1}{2}(\xi(x) - i\eta(x)).
\end{equation}
These satisfy $\xi^\dagger(x) =\xi(x)$, $\eta^\dagger(x) = \eta(x)$, and the anti-commutation relations
\begin{align*}
\{\xi(x), \xi(y)\} &= 2\delta_{x, y},  \quad \{{\eta}(x), {\eta}(y)\} =2\delta_{x, y}, \\
\{\xi(x), \eta(y)\} &=0.
\end{align*}
Moreover, we have
\begin{equation}
\label{varthetaMajorana}
(\vartheta\xi)(x) = \xi(\vartheta(x)), \quad (\vartheta\eta)(x) = - \eta(\vartheta(x)).
\end{equation}

For later convenience, we write the hopping term of our Hamiltonian (\ref{HamLambdam}) as
\[
H_K^{(\Lambda)} := \sum_{\mu=1}^\nu H_{K, \mu}^{(\Lambda)},
\]
with
\begin{equation}
\label{HKmu}
H_{K, \mu}^{(\Lambda)} := i\kappa \sum_{x \in \Lambda}(-1)^{\theta_\mu(x)}
[\psi^\dagger(x) \psi(x+e_\mu) - \psi^\dagger(x+e_\mu)\psi(x)].
\end{equation}
Moreover, we note that the interaction is {equal} to
\begin{align*}
g\sum_{x \in \Lambda} \sum_{\mu=1}^\nu \rho(x) \rho(x+e_\mu)
&= \frac{g}{2}\sum_{x \in \Lambda} \sum_{\mu=1}^\nu [\rho(x) + \rho(x+e_\mu)]^2 -\frac{g\nu\vert\Lambda\vert}{4},
\end{align*}
where we have used $\rho(x) \rho(y) = \rho(y) \rho(x)$ and $\rho(x)^2 = 1/4$.
Let $h^{(\mu)} \colon \Lambda \to \mathbb{R}$ be real-valued functions.
Following the idea of~\cite{FSS,DLS}, we define a new interaction as
\begin{equation}
\label{Hinth}
H_\text{int}^{(\Lambda)}(h) := \sum_{\mu=1}^\nu H_{\text{int,}\, \mu}^{(\Lambda)}(h) -\frac{g\nu\vert\Lambda\vert}{4}
\end{equation}
with
\begin{equation}
\label{Hintmuh}
H_{\text{int}, \, \mu}^{(\Lambda)}(h) 
= \frac{g}{2}\sum_{x \in \Lambda} [\rho(x) + \rho(x+e_\mu) +(-1)^{\sum_{i=1}^\nu x^{(i)} } h^{(\mu)}(x)]^2.
\end{equation}
When $h^{(\mu)}=0$ for all $\mu$, the interaction equals the original one.
We will also consider the corresponding total Hamiltonian, 
\[
H^{(\Lambda)}(m, h) := H_K^{(\Lambda)} + H_\text{int}^{(\Lambda)}(h)+mO^{(\Lambda)}.
\]
Clearly, when $h^{(\mu)}=0$ for all $\mu$, we have the original total Hamiltonian, i.e., $H^{(\Lambda)}(m, 0)=H^{(\Lambda)}(m)$.

First, we divide our finite lattice $\Lambda$ by the $x^{(1)} = 1/2$ hyperplane into 
$\Lambda_- :=\{x \in \Lambda \colon -L+1 \le x^{(1)} \le 0\}$ and 
$\Lambda_+ := \{x  \in \Lambda \colon 1 \le x^{(1)} \le L\}$.
As in~\cite{Koma4, FILS} we introduce some unitary transformations:
 For $j = 2, \dots, \nu$,
\[
U_{1, j} := \prod_{\substack{x\in \Lambda,\\ x^{(j)}= \text{ even}}} e^{(\frac{i\pi}{2})n(x)},
\]
with $n(x)=\psi^\dagger(x)\psi(x)$,
and
\begin{equation}
\label{U1}
U_1 := \prod_{j=2}^\nu U_{1, j}.
\end{equation}
One can see
\begin{equation*}
U_{1, j}^\dagger \psi(x)U_{1, j} = 
\begin{cases}
i\psi(x) & \mbox{for \ } x^{(j)} \text{ is even};\\
\psi(x) & \mbox{for \ } x^{(j)} \text{ is odd}.
\end{cases}
\end{equation*}
Next, consider the transformation for the Hamiltonian $H_{K, \mu}^{(\Lambda)}$ of (\ref{HKmu}). 
We have  
\[
U_{1, j}^\dagger H_{K, i }^{(\Lambda)}U_{1, j}  = H_{K, i }^{(\Lambda)}\quad\mbox{for \ }i \neq j,
\]
and 
\begin{align*}
U_{1, j}^\dagger H_{K, j}^{(\Lambda)}U_{1, j} 
&= \kappa \sum_{\substack{x \in \Lambda \\ x^{(j)} \neq L}} (-1)^{\sum_{i=1}^j x^{(i)}} [\psi^\dagger(x) \psi(x+e_j) + \text{h.c.}] \\
&\quad -\kappa \sum_{\substack{x \in \Lambda \\ x^{(j)} = L}} (-1)^{\sum_{i=1}^j x^{(i)}} [\psi^\dagger(x) \psi(x+e_j) + \text{h.c.}].
\end{align*}
Hence we have 
\begin{equation}
\label{transU1HK1}
U_1^\dagger H_{K, 1}^{(\Lambda)} U_1 = H_{K, 1}^{(\Lambda)}, 
\end{equation} 
and for $j=2, \dots, \nu$
\begin{equation}
\begin{split}
\label{eq:hop1}
U_{1}^\dagger H_{K, j}^{(\Lambda)}U_{1} 
&= \kappa \sum_{\substack{x \in \Lambda \\ x^{(j)} \neq L}} (-1)^{\sum_{i=1}^j x^{(i)}} [\psi^\dagger(x) \psi(x+e_j) + \text{h.c.}] \\
&\quad -\kappa \sum_{\substack{x \in \Lambda \\ x^{(j)} = L}} (-1)^{\sum_{i=1}^j x^{(i)}} [\psi^\dagger(x) \psi(x+e_j) + \text{h.c.}].
\end{split}
\end{equation}

In order to introduce the second unitary transformation, 
let $\Lambda_{\text{odd}} :=\{x \in \Lambda \colon x^{(1)} + \cdots + x^{(\nu)} = \text{odd}\}$.  
Then, {a} unitary transformation is defined by 
\begin{equation}
\label{Uodd}
U_\text{odd} := \prod_{x \in \Lambda_\text{odd}} u(x),
\end{equation}
where $u(x)$ is defined in (\ref{uxsigma}). It is easy to see that
\begin{equation}
\label{Uoddtranspsi}
U_\text{odd}^\dagger \psi(x) U_\text{odd} 
=
\begin{cases}
\psi^\dagger(x) & \mbox{for } x \in \Lambda_\text{odd};\\
\psi(x) & \mbox{otherwise} .
\end{cases}
\end{equation}

Now we put $\tilde U_1 := U_1 U_\text{odd}$ by using the above two unitary transformations $U_1$ of (\ref{U1}) 
and $U_\text{odd}$. Then by (\ref{eq:hop1}) we see that for $j=2, \dots, \nu-1$
\begin{equation}
\begin{split}
\tilde{H}_{K, j}^{(\Lambda)} := \tilde U_{1}^\dagger H_{K, j}^{(\Lambda)} \tilde U_{1} 
&=
\kappa \sum_{\substack{x \in \Lambda \\ x^{(j)} \neq L}} (-1)^{\sum_{i=j+1}^\nu x^{(i)}} 
[\psi^\dagger(x) \psi^\dagger(x+e_j) + \text{h.c.}] \\
&\quad -\kappa \sum_{\substack{x \in \Lambda \\ x^{(j)} = L}}  
(-1)^{\sum_{i=j+1}^\nu x^{(i)}} [\psi^\dagger(x) \psi^\dagger(x+e_j) + \text{h.c.}],
\end{split}
\end{equation}
and
\begin{equation}
\begin{split}
\tilde{H}_{K, \nu}^{(\Lambda)} := \tilde U_{1}^\dagger H_{K, \nu}^{(\Lambda)} \tilde U_{1} 
&=
\kappa \sum_{\substack{x \in \Lambda \\ x^{(\nu)} \neq L}} [\psi^\dagger(x) \psi^\dagger(x+e_\nu) + \text{h.c.}] \\
&\quad -\kappa \sum_{\substack{x \in \Lambda \\ x^{(\nu)} = L}} [\psi^\dagger(x) \psi^\dagger(x+e_\nu) + \text{h.c.}].
\end{split}
\end{equation}
These Hamiltonian are expressible as
\[
\tilde{H}_{K, j}^{(\Lambda)} = \tilde{H}_{K, j}^+ + \tilde{H}_{K, j}^-,
\]
for $j=2, \dots, \nu$, where $\tilde{H}_{K, j}^\pm \in \mathcal{A}_\pm$ satisfy $\vartheta(\tilde{H}_{K, j}^\pm) =\tilde{H}_{K, j}^\mp$.
For $H_{K, 1}^{(\Lambda)}$ of (\ref{HKmu}), we use the relation
\[
\psi^\dagger(x) \psi(y) - \psi^\dagger(y) \psi(x) = \frac{1}{2}[\xi(x) \xi(y) +\eta(x) \eta(y)],
\]
which can be derived from the expressions (\ref{psiMajorana}). 
One has $U^\dagger_\text{odd} \xi(x) U_\text{odd} = \xi(x)$ and
\begin{equation*}
U_\text{odd}^\dagger \eta(x) U_\text{odd} =
\begin{cases}
-\eta(x) & \mbox{for } x \in \Lambda_\text{odd};\\
\eta(x) & \mbox{otherwise} 
\end{cases}
\end{equation*}
{from} (\ref{Majoranapsi}) and (\ref{Uoddtranspsi}). 
By combining these with (\ref{transU1HK1}) and $\tilde U_1 := U_1 U_\text{odd}$, we have
\begin{equation}
\begin{split}
\tilde{H}_{K, 1}^{(\Lambda)} := \tilde U_{1}^\dagger H_{K, 1}^{(\Lambda)} \tilde U_{1} 
&=
\frac{i\kappa}{2} \sum_{\substack{x \in \Lambda \\ x^{(1)} \neq L}} [\xi(x) \xi(x+e_1) - \eta(x) \eta(x+e_1) ]\\
&\quad -\frac{i\kappa}{2} \sum_{\substack{x \in \Lambda \\ x^{(1)} = L}} [\xi(x) \xi(x+e_1) - \eta(x) \eta(x+e_1) ].
\end{split}
\end{equation}
Hence the Hamiltonian has the form
\[
\tilde{H}_{K, 1}^{(\Lambda)} = \tilde{H}_{K, 1}^{+} + \tilde{H}_{K, 1}^{-} +\tilde{H}_{K}^{0},  
\]
where
\[
\tilde{H}_{K, 1}^{\pm} := 
 \frac{i\kappa}{2} \sum_{\substack{x, x+e_1 \in \Lambda_\pm}} [\xi(x) \xi(x+e_1) - \eta(x) \eta(x+e_1) ] ,
\]
and
\[
\tilde{H}_{K}^{0}
:= 
 \frac{i\kappa}{2}  \sum_{\substack{x \in \Lambda \\ x^{(1)} = 0, -L+1}}[\xi(x) (\vartheta\xi)(x) + \eta(x) (\vartheta\eta)(x)].
\]
Here, we have used the relations (\ref{varthetaMajorana}) for the Majorana fermions. 
Then $\tilde{H}_{K, 1}^{\pm} \in \mathcal{A}_\pm$ and $\vartheta(\tilde{H}_{K, 1}^{\pm})=\tilde{H}_{K, 1}^{\mp}$ follows.
{From} these observations, we have
\[
\tilde{H}_{K} := \tilde U_1^\dagger \tilde{H}_{K}^{(\Lambda)} \tilde U_1
= \sum_{j=1}^\nu \tilde{H}_{K, j}^{+} + \sum_{j=1}^\nu \tilde{H}_{K, j}^{-} + \tilde{H}_{K}^0 =: \tilde{H}_{K} ^+ + \tilde{H}_{K}^- +\tilde{H}_{K}^0.
\]

For the interaction $H_\text{int}^{(\Lambda)}(h)$ of (\ref{Hinth}), we have that $U_1^\dagger \rho(x) U_1 = \rho(x)$ and
\begin{equation*}
U_\text{odd}^\dagger \rho(x) U_\text{odd} =
\begin{cases}
-\rho(x) & \mbox{for } x \in \Lambda_\text{odd};\\
\rho(x) & \mbox{otherwise}
\end{cases}
\end{equation*}
in the same way. Hence
\[
\tilde H_\text{int}^{(\Lambda)}(h) := \tilde U_1^\dagger H_\text{int}^{(\Lambda)}(h) \tilde U_1 = 
\frac{g}{2}\sum_{\mu =1}^\nu \sum_{x \in \Lambda} [\rho(x) - \rho(x+e_\mu) +h^{(\mu)}(x)]^2 -\frac{g\nu\vert\Lambda\vert}{4}.
\]
Then the interaction Hamiltonian can be written as
\[
\tilde H_\text{int}^{(\Lambda)}(h) = \tilde H_\text{int}^{+}(h) + \tilde H_\text{int}^{-}(h) + \tilde H_\text{int}^{0}(h),
\]
where $\tilde H_\text{int}^{\pm}(h) \in \mathcal{A}_\pm$ (but $\tilde H_\text{int}^{\pm}(h) \neq 
\vartheta(\tilde H_\text{int}^{\mp}(h))$ when $h \neq 0$) and
\[
\tilde H_\text{int}^{0}(h)
:=
\frac{g}{2}\sum_{\substack{x \in \Lambda \\ x^{(1)} = 0, L}}[\rho(x) - \rho(x+e_1) +h^{(1)}(x)]^2.
\]
Similarly, we see
\[
\tilde O^{(\Lambda)} := \tilde U_1^\dagger O^{(\Lambda)} \tilde U_1 = 
\sum_{x \in \Lambda} \rho(x),
\]
and
\[
\tilde H_\text{SB}(m) := m\tilde O^{(\Lambda)} 
= \tilde H_\text{SB}^{+}(m) +\tilde H_\text{SB}^{-}(m),
\]
with $H_\text{SB}^{\pm}(m) \in \mathcal{A}_\pm$ satisfying $\vartheta(H_\text{SB}^{\pm}(m))= H_\text{SB}^{\mp}(m)$.

We conclude that
\begin{align}
\label{exprestildeHLambdamh}
\tilde H^{(\Lambda)}(m, h) &:= \tilde U_1^\dagger  H^{(\Lambda)}(m, h) \tilde U_1 \nonumber\\
&= \tilde H^{+}(m, h) + \tilde H^{-}(m, h)  + \tilde H^{0}(m, h),
\end{align}
where $\tilde H^{\pm}(m, h) \in \mathcal{A}_\pm$ and
\[
\tilde H^{0}(m, h) := \tilde{H}_{K}^{0} + \tilde{H}_\text{int}^{0}(h).
\]

We will prove the following proposition:
\begin{proposition}[Gaussian domination]
\label{GaussianD}
For any real-valued $h = (h^{(1)}, \dots, h^{(\mu)})$, it follows that
\begin{align}
\label{GaussianDbund}
\mathrm{tr} \exp[- \beta H^{(\Lambda)}(m, h)] 
&\le
\mathrm{tr} \exp[- \beta H^{(\Lambda)}(m, 0)].
\end{align}
\end{proposition}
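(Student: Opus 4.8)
The plan is to run the classical Gaussian-domination argument of Fröhlich--Simon--Spencer and Dyson--Lieb--Simon \cite{FSS,DLS}, adapted to the fermionic setting via the reflection positivity of \cite{Koma4}. First, since the trace is invariant under the unitary $\tilde U_1$, it is enough to prove the bound with $H^{(\Lambda)}(m,h)$ replaced by $\tilde H^{(\Lambda)}(m,h)$, and I would work throughout with the decomposition (\ref{exprestildeHLambdamh}), $\tilde H^{(\Lambda)}(m,h)=\tilde H^{+}(m,h)+\tilde H^{-}(m,h)+\tilde H^{0}(m,h)$ with $\tilde H^{\pm}(m,h)\in\mathcal A_\pm$ (the symmetry-breaking term $m\tilde O^{(\Lambda)}$ is reflection symmetric and enters only $\tilde H^{\pm}$). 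The argument then splits into three steps: (i) a Cauchy--Schwarz inequality coming from reflection positivity across the hyperplane $x^{(1)}=1/2$, which bounds $\mathrm{tr}\,e^{-\beta\tilde H^{(\Lambda)}(m,h)}$ by the geometric mean of the same quantity evaluated at two configurations in which $h$ has been mirrored across that hyperplane; (ii) an iteration of (i) over all reflection hyperplanes of the torus, reducing matters to configurations in which each $h^{(\mu)}$ is constant on $\Lambda$; and (iii) an elementary evaluation of the partition function for constant shift fields.

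For step (i) I would invoke \cite{Koma4}. The half-Hamiltonians lie in $\mathcal A_\pm$, and the crossing part $\tilde H^{0}(m,h)=\tilde H_{K}^{0}+\tilde H_\text{int}^{0}(h)$ is built from the Majorana bilinears $\tfrac{i\kappa}{2}[\xi(x)(\vartheta\xi)(x)+\eta(x)(\vartheta\eta)(x)]$ and from a quadratic expression in $\rho(x)$ and $(\vartheta\rho)(x)=\rho(x+e_1)$. By (\ref{varthetaMajorana}) and the antiperiodic boundary condition encoded in $\theta_\mu$, this crossing part has precisely the structure for which the fermionic reflection positivity of \cite{Koma4} holds, so the sesquilinear form $A\mapsto \mathrm{tr}\bigl(A\,\vartheta(A)\,e^{-\beta\tilde H^{(\Lambda)}(m,h)}\bigr)$ on $\mathcal A_+$ is positive semidefinite; Cauchy--Schwarz then gives
\[
\Bigl(\mathrm{tr}\,e^{-\beta\tilde H^{(\Lambda)}(m,h)}\Bigr)^2\le \mathrm{tr}\,e^{-\beta\tilde H^{(\Lambda)}(m,\Theta^{+}h)}\cdot \mathrm{tr}\,e^{-\beta\tilde H^{(\Lambda)}(m,\Theta^{-}h)},
\]
where $\Theta^{\pm}h$ agrees with $h$ on $\Lambda_\pm$ and equals the mirror image of that part on the opposite half, with the signs of the first component and of the crossing values of $h^{(1)}$ adjusted as dictated by $\vartheta$ (after completing the square, $\tilde H_\text{int}^{0}(h)$ couples $\rho(x)+h^{(1)}(x)$ on one side to $\rho(x)-h^{(1)}(x)$ on the other, which forces these sign reassignments). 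This is the step I expect to be the main obstacle: one must check carefully that both $\tilde H_{K}^{0}$ — whose reflection-positive sign hinges, through $(\vartheta\eta)(x)=-\eta(\vartheta(x))$, on the antiperiodic boundary condition built into $\theta_\mu$ — and the $h$-dependent crossing interaction $\tilde H_\text{int}^{0}(h)$ genuinely satisfy the hypotheses of \cite{Koma4}; everything after that is bookkeeping of reflections and an elementary operator inequality.

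For step (ii), the torus $\Lambda$ with these boundary conditions is invariant under reflection across the hyperplanes $x^{(k)}=1/2$ and $x^{(k)}=L+1/2$ for every $k=1,\dots,\nu$, and for each of these an analogue of the unitary device of Section~\ref{RPGD} puts the Hamiltonian into the split form required by step (i). Iterating the inequality of step (i) over all these hyperplanes, together with translation invariance, forces each $h^{(\mu)}$ to be constant on $\Lambda$, so that $\mathrm{tr}\,e^{-\beta\tilde H^{(\Lambda)}(m,h)}\le \sup_{c}\mathrm{tr}\,e^{-\beta\tilde H^{(\Lambda)}(m,h_c)}$ over constant configurations $h_c^{(\mu)}\equiv c_\mu$. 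Finally, for step (iii), expanding the square in $\tilde H_\text{int}^{(\Lambda)}$ and using $\sum_{x\in\Lambda}[\rho(x)-\rho(x+e_\mu)]=0$ (periodicity) yields the operator identity
\[
\tilde H^{(\Lambda)}(m,h_c)=\tilde H^{(\Lambda)}(m,0)+\frac{g\vert\Lambda\vert}{2}\sum_{\mu=1}^{\nu}c_\mu^2,
\]
so the difference is a nonnegative multiple of the identity; hence $\mathrm{tr}\,e^{-\beta\tilde H^{(\Lambda)}(m,h_c)}\le \mathrm{tr}\,e^{-\beta\tilde H^{(\Lambda)}(m,0)}$, and chaining the three steps and undoing $\tilde U_1$ gives (\ref{GaussianDbund}).
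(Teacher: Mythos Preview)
Your overall architecture matches the paper's: establish a reflection Cauchy--Schwarz inequality of the form $Z(h)^2\le Z(h_+)Z(h_-)$ across one hyperplane, extend it to all hyperplanes by gauge equivalence, and conclude $h\equiv 0$ is optimal. Two points deserve comment.

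First, your formulation of step (i) is not quite right. The sesquilinear form $A\mapsto\mathrm{tr}\bigl(A\,\vartheta(A)\,e^{-\beta\tilde H^{(\Lambda)}(m,h)}\bigr)$ is \emph{not} positive semidefinite for generic $h$, precisely because $\vartheta(\tilde H^{-}(m,h))\neq\tilde H^{+}(m,h)$ and because the crossing piece $\tilde H_{\text{int}}^{0}(h)=\tfrac{g}{2}\sum[\rho(x)-\rho(x+e_1)+h^{(1)}(x)]^2$ contains the linear term $h^{(1)}(x)(\rho(x)-\rho(x+e_1))$, which is not of the form $-\sum C_i\vartheta(C_i)$. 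The paper handles this by a Trotter expansion together with the Gaussian identity $e^{-D^2}=(4\pi)^{-1/2}\int e^{ikD}e^{-k^2/4}\,dk$ applied to each crossing factor; the $h^{(1)}$-dependence then appears only as a unimodular phase, which is bounded by $1$, and Majorana reflection positivity (Lemma~\ref{lem.RP}) gives the Schwarz inequality on the remaining integrand. This is the mechanism behind the key inequality (\ref{Eq.Gaussian0}); simply declaring a sesquilinear form PSD skips over the actual obstacle.

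Second, your reduction in step (ii) differs from the paper's. The paper does not iterate reflections toward constant configurations; instead it observes that the reflected fields $h_\pm$ have $h_\pm^{(1)}=0$ on the crossing bonds, and then runs the maximizer-with-maximal-number-of-zeros argument: a maximizer of $Z(\cdot)$ exists (since $Z(h)\to 0$ as $|h|\to\infty$), and if it had a nonzero entry one could reflect across the corresponding hyperplane to produce a maximizer with strictly more zeros, a contradiction. This dispenses with your step (iii) entirely. Your route via iteration to constants followed by the operator identity $\tilde H^{(\Lambda)}(m,h_c)=\tilde H^{(\Lambda)}(m,0)+\tfrac{g|\Lambda|}{2}\sum_\mu c_\mu^2$ is a legitimate alternative in spirit, but as written the sentence ``iterating \dots\ forces each $h^{(\mu)}$ to be constant'' needs justification (one has to track a $2^N$-fold geometric mean and use translation invariance carefully), whereas the maximizer argument is a two-line contradiction.
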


\begin{proof}
Using the Lie-Trotter product formula, we have 
$$
\mathrm{tr} \exp[- \beta \tilde H^{(\Lambda)}(m, h)]= \lim_{n \to \infty}\mathrm{tr}( \alpha_n^n)
$$ 
with
\[
\alpha_n
:=
\left(1 - \frac{\beta}{n} \tilde H^{0}_{K, 1} \right)
\left[\prod_{\substack{x \in \Lambda: \\ x^{(1)} =0, L}} e^{-\frac{\beta g}{2n}(\rho(x) - \rho(x+ e_1) + h^{(1)}(x))^2}\right] 
e^{- \frac{\beta}{n} \tilde H^{-}(m, h)} e^{-\frac{\beta}{n} \tilde H^+(m, h)}
\]
For any real hermitian $D$, we can write
\[
e^{-D^2} = (4 \pi)^{-1/2} \int_\mathbb{R} dk\; e^{ikD} e^{-k^2/4}.
\]
Hence we see
\begin{align*}
e^{-\frac{\beta g}{2n}(\rho(x) - \rho(x+ e_1) + h^{(1)}(x))^2}
&=
 \int_\mathbb{R} \frac{dk}{\sqrt{4 \pi}} e^{-k^2/4} e^{ik \sqrt{\frac{\beta g}{2n}} \rho(x)} 
  e^{-ik \sqrt{\frac{\beta g}{2n}} \rho(x+ e_1)} e^{ik \sqrt{\frac{\beta g}{2n}}  h^{(1)}(x)}.
\end{align*}
Denoting
\begin{align*}
&K_-(k, m, h) := A_-(k) e^{-\beta \tilde H^- (m, h)/n}, \\
&K_+(k, m, h) := \underbrace{\vartheta(A_-(k))}_{=: A_+(k)} e^{-\beta \tilde H^+ (m, h)/n},\\
A_-(k) &:=  \prod_{\substack{x \in \Lambda, \\ x^{(1)} =0}} \exp\left[ik(x) \sqrt{\frac{\beta g}{2n}} \rho(x)\right]
\prod_{\substack{x \in \Lambda, \\ x^{(1)} =L}}   \exp\left[-ik(x) \sqrt{\frac{\beta g}{2n}} \rho(x+e_1)\right],
\end{align*}
we have
\begin{align*}
\mathrm{tr} \left(\alpha_n^n\right)
=
 \int d\mu(k_1)\cdots d\mu(k_n)&
\mathrm{tr} \left[\prod_{i=1}^n \left(1 - \frac{\beta}{n} \tilde H^{0}_{K, 1} \right) K_-(k_i, m, h) K_+(k_i, m, h)
\right] \\
& \times \prod_{\substack{x \in \Lambda, \\ x^{(1)} =0, L}}  
\exp\left[i \left\{k_1(x)+ \cdots+ k_n(x)\right\} \sqrt{\frac{\beta g}{2n}}  h^{(1)}(x)\right]
\end{align*}
with
\begin{align*}
\int d\mu(k)
&:=
\prod_{\substack{x \in \Lambda, \\ x^{(1)} =0, L}} \int_\mathbb{R} \frac{dk(x)}{\sqrt{4\pi}} e^{- k(x)^2/4}.
\end{align*}
We note that each term in the trace can be written
\begin{equation}
\begin{split}
\label{Eq.integrand}
\left(- \frac{i\beta \kappa}{2n} \right)^j&
K_-(k_1, m, h) K_+(k_1, m, h) \cdots K_-(k_{l_1}, m, h) K_+(k_{l_1}, m, h) \gamma_1 \vartheta(\gamma_1) \\
& \times K_-(k_{l_1+1}, m, h) K_+(k_{l_1+2}, m, h) \cdots K_-(k_{l_2}, m, h) K_+(k_{l_2}, m, h) \gamma_2 \vartheta(\gamma_2) \\
& \times K_-(k_{l_2+1}, m, h) K_+(k_{l_2+2}, m, h) \cdots K_-(k_{l_j}, m, h) K_+(k_{l_j}, m, h) \gamma_j \vartheta(\gamma_j)\\
&\times K_-(k_{l_j+1}, m, h) K_+(k_{l_j+2}, m, h) \cdots K_-(k_{n}, m, h) K_+(k_{n}, m, h),
\end{split}
\end{equation}
where $\gamma_l \in \{\xi(x), \eta(x)\}$, for $x \in \Lambda$ with $x^{(1)} = 0$ or $x^{(1)}= -L+1$.
Since $\tilde H^- (m, h)$ can be written as a sum of an even number of monomials in Majoranas, each $K_-(k, m, h)$ can also be written as a sum of even Majoranas.
From~\cite[Prop.~1]{JP}, the trace of (\ref{Eq.integrand}) vanishes 
unless each Majorana appears an even number of times on the half lattices.
Hence we consider only the case that $j$ is even.
Since $K_\pm(k, m, h) \in \mathcal{A}_\pm$ has even fermion parity, 
we see $[K_-, K_+] = 0$, $[K_+, \gamma_i] =0$, and $[K_-, \vartheta(\gamma_i)] =0$.
Combining these with $\vartheta(\gamma_1) \gamma_2 = - \gamma_2\vartheta(\gamma_1)$, we have that
\begin{align*}
&K_-(k_1, m, h) K_+(k_1, m, h) \cdots K_-(k_{l_1}, m, h) K_+(k_{l_1}, m, h) \gamma_1 \vartheta(\gamma_1) \\
& \times K_-(k_{l_1+1}, m, h) K_+(k_{l_1+2}, m, h) \cdots K_-(k_{l_2}, m, h) K_+(k_{l_2}, m, h) \gamma_2 \vartheta(\gamma_2) \\
&=
-K_-(k_1, m, h)  \cdots K_-(k_{l_1}, m, h) \gamma_1 K_-(k_{l_1 +1 }, m, h) \cdots K_-(k_{l_2}, m, h) \gamma_2  \\
& \times K_+(k_{1}, m, h)  \cdots K_+(k_{l_1}, m, h) \vartheta(\gamma_1) K_+(k_{l_1 +1 }, m, h) 
\cdots K_+(k_{l_2}, m, h) \vartheta(\gamma_2)\\
&=: -X_-(1)X_+(1),
\end{align*}
where $X_\pm(1) \in \mathcal{A}_\pm$.
Hence, for $j = 2m$, the term (\ref{Eq.integrand}) can be written in the form, 
\begin{align*}
&\left(\frac{\beta \kappa}{2n} \right)^{2m} X_-(1)X_+(1) \cdots X_-(m)X_+(m) \\
&= \left(\frac{\beta \kappa}{2n} \right)^{2m} X_-(1)X_-(2) \cdots X_-(m) X_+(1) X_+(2) \cdots X_+(m),
\end{align*}
where $X_\pm(i) \in \mathcal{A}_\pm, i=1,2,\ldots,m$, and we have used $[X_-(i), X_+(j)]=0$ for all $i,j$.

Consequently, we have
\begin{align}
\label{exprestralphann}
\mathrm{tr}(\alpha_n^n)
=\int d\mu(k_1) \cdots d\mu(k_n)& \sum_{j} \mathrm{tr} \left( W_-(j) W_+(j)\right) \nonumber\\
& \times \prod_{\substack{x \in \Lambda, \\ x^{(1)} =0, L}}  
\exp\left[i \left\{k_1(x)+ \cdots+ k_n(x)\right\} \sqrt{\frac{\beta g}{2n}}  h^{(1)}(x)\right],
\end{align}
with
\[
W_\pm(j) := \left(\frac{\beta \kappa}{2n} \right)^{m_j} X_\pm(1, j)X_\pm(2, j) \cdots X_\pm(m_j, j).
\]
Here $m_j \ge 0$ is an integer and $X_{\pm}(i, j) \in \mathcal{A}_\pm$ for $i = 1, \dots, m_j$. 

The following lemma is essentially the same as~\cite[Prop.~2]{JP}. 
\begin{lemma}
\label{lem.RP}
For any operator $A \in \mathcal{A}_\pm$ it holds that
\[
\mathrm{tr} (A \vartheta(A)) \ge 0.
\]
\end{lemma}

By Lemma~\ref{lem.RP}, the Cauchy--Schwarz inequality  $\vert\mathrm{tr}(A \vartheta(B))\vert^2 \le \mathrm{tr}(A \vartheta(A)) \mathrm{tr}(B \vartheta(B))$ 
holds for $A, B \in \mathcal{A}_-$. 
Combining this inequality with the Cauchy--Schwarz inequality for the sum and integrations, we can evaluate 
$\mathrm{tr}(\alpha_n^n)$ of (\ref{exprestralphann}) as follows: 
\begin{align*}
\vert\mathrm{tr}(\alpha_n^n)\vert
&\le
\int d\mu(k_1) \cdots d\mu(k_n) \sum_{j} \left\vert\mathrm{tr} \left( W_-(j) W_+(j)\right)\right\vert \\
&\le
\int d\mu(k_1) \cdots d\mu(k_n) \sum_{j} \sqrt{\mathrm{tr}\left[W_-(j) \vartheta (W_-(j))\right]}  \sqrt{\mathrm{tr}\left[W_+(j) \vartheta ( W_+(j))\right]}\\
&\le
\int d\mu(k_1) \cdots d\mu(k_n) 
\left(\sum_j \mathrm{tr}\left[W_-(j) \vartheta (W_-(j))\right] \right)^{1/2}
\left(\sum_j \mathrm{tr}\left[W_+(j) \vartheta ( W_+(j))\right] \right)^{1/2} \\
&\le
\left(\int d\mu(k_1) \cdots d\mu(k_n) \sum_{j} \mathrm{tr} \left[ W_-(j) \vartheta(W_-(j))\right]\right)^{1/2} \\
&\quad \times \left(\int d\mu(k_1) \cdots d\mu(k_n) \sum_{j} \mathrm{tr} \left[ W_+(j) \vartheta( W_+(j))\right]\right)^{1/2},
\end{align*}
where we have used $W_+(j) = \vartheta (\vartheta (W_+(j)))$ with $\vartheta (W_+(j)) \in \mathcal{A}_-$.

Undoing the above steps, we see that
\begin{equation}
\begin{split}
\label{Eq.trace}
\left\{\mathrm{tr} \exp[- \beta \tilde H^{(\Lambda)}(m, h)] \right\}^2
&\le
\mathrm{tr} \exp[- \beta ( \tilde H^{(-)}(m, h)) +  \vartheta(\tilde H^{(-)}(m, h)+ \tilde H^{(0)}(0)) ]\\
& \quad \times 
\mathrm{tr} \exp[- \beta (\vartheta(\tilde H^{(+)}(m, h))+ \tilde H^{(+)}(m, h)  + \tilde H^{(0)}(0)) ].
\end{split}
\end{equation}
For $h = (h^{(1)}, \dots, h^{(\nu)})$, we define $h_\pm = (h_\pm^{(1)}, \dots, h_\pm^{(\nu)})$ by
\begin{equation*}
h_\pm^{(1)} :=
\begin{cases}
h^{(1)}(x) & \mbox{if } x \in \Lambda_\pm, \, x^{(1)}\neq 0,  L \\
-h^{(1)}(\vartheta(x+e_1)) & \mbox{if } x \in \Lambda_\mp, \, x^{(1)} \neq 0, L \\
0  & \mbox{otherwise, \ }
\end{cases}
\end{equation*}
and for $i = 2, \dots, \nu$
\begin{equation*}
h_\pm^{(i)} :=
\begin{cases}
h^{(i)}(x) & \mbox{if } x \in \Lambda_\pm \\
h^{(i)}(\vartheta(x)) & \mbox{if } x \in \Lambda_\mp.
\end{cases}
\end{equation*}
Then, by (\ref{Eq.trace}), (\ref{exprestildeHLambdamh}) 
and $U e^{A} U^{-1} = e^{UAU^{-1}}$ for any matrix $A$ and invertible $U$, we have
\begin{equation}
\label{Eq.Gaussian0}
\left\{ \mathrm{tr} \exp[-\beta H^{(\Lambda)}(m, h)] \right\}^2
\le  \mathrm{tr} \exp[-\beta H^{(\Lambda)}(m, h_-)]   \mathrm{tr} \exp[-\beta H^{(\Lambda)}(m, h_+)]. 
\end{equation}

We claim that the inequality (\ref{Eq.Gaussian0}) also holds for reflections across any hyperplane.
The proof is essentially the same as in~\cite[Sect.~4]{Koma4}, but we repeat the argument here for completeness.
We consider a unitary operator
\[
U_{\mathrm{HA}}(j \to 1) := U_{\mathrm{HA}}(j, j-1) U_{\mathrm{HA}}(j, j-2) \cdots U_{\mathrm{HA}}(j, 1)
\]
with
\[
U_{\mathrm{HA}}(i, j)  := \prod_{\substack{x \in \Lambda\\  x^{(i)}, x^{(j)}  = \text{ odd} }} e^{i\pi n(x)}.
\]
Then we have
\[
U^\dagger_{\mathrm{HA}}(j \to 1) H_K^{(\Lambda)} U_{\mathrm{HA}}(j \to 1)
=i\kappa \sum_{x \in \Lambda} \sum_{\mu =1}^\nu (-1)^{\tilde \theta_\mu(x)}
[\psi^\dagger(x) \psi(x+e_\mu) - \psi^\dagger(x+e_\mu)\psi(x)] 
\]
where
\begin{equation*}
\tilde \theta_j(x):=
\begin{cases}
0 & \mbox{if \ } x^{(j)}\ne L;\\
1 & \mbox{if \ } x^{(j)}=L,
\end{cases}
\end{equation*}
$\tilde \theta_\mu = \theta_\mu$ for $\mu > j$, and for $\mu < j$, 
\begin{equation*}
\tilde \theta_\mu(x):=
\begin{cases}
x^{(1)}+\cdots+x^{(\mu-1)} +x^{(j)} & \mbox{if \ } x^{(\mu)}\ne L;\\
x^{(1)}+\cdots+x^{(\mu-1)}+ x^{(j)}+1 & \mbox{if \ } x^{(\mu)}= L.
\end{cases}
\end{equation*}
This shows that the hopping amplitudes are gauge equivalent in all directions.

Next, we consider
\[
U_{\mathrm{BC}, i} (L \to l) := \prod_{\substack{x \in \Lambda\\ l \le x^{(i)} \le L}} e^{i\pi n(x)}.
\]
If $l \le x^{(i)} \le L$ then
\[
U^\dagger_{\mathrm{BC}, i} (L \to l) \psi(x) U_{\mathrm{BC}, i} (L \to l) = - \psi(x).
\]
Hence this transformation changes the role of $\{(-L+1, x^{(2)}, \dots, x^{(\nu)}), (L, x^{(2)}, \dots, x^{(\nu)})  \}$ to 
$\{(l-1, x^{(2)}, \dots, x^{(\nu)}), (l, x^{(2)}, \dots, x^{(\nu)})  \}$, etc.
Thus, together with the above, all the reflections across any hyperplane are equivalent in the sense of these gauge equivalence.

We now turn to the proof of Gaussian domination.
Since $\mathrm{tr} \exp[-\beta H^{(\Lambda)}(m, h)]$ is bounded and continuous in $h$, there is at least one maximizer.
We choose a maximizer $h_0$ to be one that contains the maximal number of zeros.
We claim that $h_0^{(i)}(x) = 0$ for all $i, x$.
If not, we may assume that $h_0^{(1)}(L, x^{(2)}, \dots, x^{(\nu)}) \neq 0$ for some $(x^{(2)}, \dots, x^{(\nu)})$ by the above arguments.
By (\ref{Eq.Gaussian0}), $h_\pm$ obtained from $h_0$ are also maximizers.
This contradicts our assumption on $h_0$, because $h_\pm$ must contain strictly more zero than $h_0$ by definitions.
Hence $h_0 \equiv 0$.
This completes the proof.
\end{proof}

\section{Long-range order at non-zero temperatures}
\label{LROnonzero}

Now we give a proof of the existence of the long-range order for non-zero temperatures in the dimensions $\nu\ge 3$. 

By using the unitary transformation $U_{\rm odd}$ of (\ref{Uodd}), 
let $H_\mathrm{odd}(m, h) := U_\mathrm{odd}^\dagger H^\mathrm{(\Lambda)}(m, h) U_\mathrm{odd}$.
Then, by the Gaussian domination bound (\ref{GaussianDbund}), we have
\begin{equation}
\label{Eq.Gaussian1}
Z_\mathrm{odd}(h) :=
\mathrm{tr} \exp[-\beta H_\mathrm{odd}(m, h)]
\le
Z_\mathrm{odd}(0).
\end{equation}
For any pairs of operators $A$ and $B$, we introduce the Duhamel two-point function by
\[
(A, B) := Z_\mathrm{odd}(0)^{-1}\int_0^1 ds \,
\mathrm{tr}\left( e^{-s \beta H_\mathrm{odd}(m, 0)} A e^{-(1-s)\beta H_\mathrm{odd}(m, 0) } B \right).
\]
Since $U_\mathrm{odd}^\dagger \rho(x) U_\mathrm{odd} = (-1)^{x_1 + \cdots + x_\nu} \rho(x)$, we see
\begin{align*}
U_\mathrm{odd}^\dagger H_{\mathrm{int}, \mu}^{(\Lambda)}(h) U_\mathrm{odd} 
&=
\frac{g}{2} \sum_{x \in \Lambda} [\rho(x) - \rho(x+e_\mu) +h^{(\mu)}(x)]^2\\
&=\frac{g}{2} \sum_{x \in \Lambda} \left[(\rho(x) - \rho(x+e_\mu))^2 
+2\rho(x) (h^{(\mu)}(x) - h^{(\mu)}(x-e_\mu))\right. \\
&\quad \quad \quad \quad \left. + h^{(\mu)}(x)^2\right],
\end{align*}
where $H_{\mathrm{int}, \mu}^{(\Lambda)}(h)$ is given by (\ref{Hintmuh}) 
for the interaction Hamiltonian $H_{\rm int}(h)$ of (\ref{Hinth}), and 
we have used $\sum_{x \in \Lambda} [\rho(x) - \rho(x+e_\mu)]h^{(\mu)}(x) 
= \sum_{x \in \Lambda}\rho(x) [h^{(\mu)}(x) - h^{(\mu)}(x-e_\mu)]$. 

We write $\partial_j h^{(\mu)} (x) := h^{(\mu)}(x) -h^{(\mu)}(x-e_j)$. 
We claim that the following inequality is valid:  
\begin{equation}
\label{Eq.infrared}
\left(\rho\left[\overline{\sum_\mu \partial_\mu h^{(\mu)}}\right], \rho\left[\sum_\mu \partial_\mu h^{(\mu)}\right] \right)
\le
\frac{1}{\beta g} \sum_{\mu=1}^\nu \sum_{x \in \Lambda} \vert h^{(\mu)}(x)\vert^2
\end{equation}
for any complex-valued functions $h^{(\mu)}$, 
where we have written  $\rho[f] := \sum_x \rho(x)f(x)$, and $\overline{z}$ denotes the complex conjugate of $z\in\mathbb{C}$. 
Using $d^2 Z_\mathrm{odd}(\epsilon h)/d\epsilon^2 \vert_{\epsilon = 0} \le 0$ by (\ref{Eq.Gaussian1})  and the identity
\[
{(A, A) Z_\mathrm{odd}(0) = \left.\frac{d^2}{d s^2} \mathrm{tr} \left[\exp(-\beta H_\mathrm{odd}(m, 0) + sA) \right] \right\vert_{s=0}},
\]
we have (\ref{Eq.infrared}) for real-valued functions $h^{(\mu)}$.
It also holds for complex-valued $h^{(\mu)}$, because $(A^\dagger, A) = (A_1, A_1) + (A_2, A_2)$  
for $A = A_1 + i A_2$ with $A_i^\dagger = A_i$, $i=1, 2$.
Thus the inequality (\ref{Eq.infrared}) holds for any complex-valued functions $h^{(\mu)}$. 

Taking $h^{(\mu)}(x) = \vert\Lambda\vert^{-1/2} (\exp(ip \cdot(x+ e_\mu)) - \exp(ip\cdot x))$ with $p= (p^{(1)}, \dots, p^{(\nu)})$, we have
\[
\partial_\mu h^{(\mu)}(x) = -2\vert\Lambda\vert^{-1/2}e^{ip\cdot x}(1- \cos p^{(\mu)})
\]
and
\[
\sum_{x \in \Lambda} \sum_{\mu=1}^\nu \vert h_m(x)\vert^2
=\sum_{\mu=1}^\nu(1- \cos p^{(\mu)}) =: {E_p}.
\]
For $p \in \Lambda^\ast$, the dual lattice of $\Lambda$, 
let $\tilde \rho_p := \vert\Lambda\vert^{-1/2} \sum_x \rho(x) \exp(ip \cdot x)$.
Then, from $H_\mathrm{odd}(m, h)=U_\mathrm{odd}^\dagger H^\mathrm{(\Lambda)}(m, h) U_\mathrm{odd}$ 
and the inequality (\ref{Eq.infrared}), we obtain the infrared bound
\begin{equation}
\label{Eq.IB}
(\tilde \rho_p, \tilde \rho_{-p})_{\beta, m} \le \frac{1}{2 \beta g E_{p+Q}},
\end{equation}
where $Q=(\pi, \dots, \pi)$ and
\[
(A, B)_{\beta, m} := \frac{1}{\mathrm{tr} \exp [-\beta H^{(\Lambda)}(m, 0)]}\int_0^1 ds \,
\mathrm{tr}\left( e^{-s \beta H^{(\Lambda)}(m, 0)} A e^{-(1-s)\beta H^{(\Lambda)}(m, 0) } B \right).
\]

Let $C_p := \langle [\tilde \rho_{p}, [H^{(\Lambda)}(0), \tilde \rho_{-p}]] \rangle_{\beta, 0}^{(\Lambda)} $ 
be the expectation value (\ref{expectationbetam}) of the double commutator with $m=0$. 
Then $C_p \ge 0$ by an eigenfunction expansion (see the next line of \cite[Eq.~(28)]{DLS}).
From~\cite[Thm.~3.2 \& Cor~3.2]{DLS} and the infrared bound (\ref{Eq.IB}), we obtain that
\begin{align*}
 \langle \tilde \rho_{p} \tilde \rho_{-p}  + \tilde \rho_{-p} \tilde \rho_{p} \rangle_{\beta, 0}^{(\Lambda)}
 &\le
 \sqrt{\frac{C_p}{2gE_{p+Q}}} \coth\left(\sqrt{\frac{C_p \beta^2 g E_{p+Q} }{2 } }\right) \\
 &\le
 \sqrt{ \frac{C_p}{2 g E_{p+Q}}} + \frac{1}{\beta g E_{p+Q}},
\end{align*}
where we have used $\coth x \le 1+ 1/x$.
Using this inequality, we have

\begin{equation}
\label{Eq.LRO}
\begin{split}
 \frac{1}{ \vert\Lambda\vert} \sum_{p \in \Lambda^\ast}    \langle \tilde \rho_{p} \tilde \rho_{-p}  + \tilde \rho_{-p} \tilde \rho_{p} \rangle_{\beta, 0}^{(\Lambda)}
&\le \vert\Lambda\vert^{-1}\sum_{p \neq Q}\left(  \frac{1}{  \beta g E_{p+Q}} 
+ \sqrt{\frac{C_p}{2g E_{p+Q}}}\right) 
+\frac{2\langle \tilde \rho_{Q} \tilde \rho_{Q}  \rangle_{\beta,  0}^{(\Lambda)}}{\vert\Lambda\vert}.
\end{split}
\end{equation}
The second term in the right-hand side of (\ref{Eq.LRO}) can be written in terms of the long-range order parameter 
which is defined by $m_\mathrm{LRO}^{(\Lambda)} :=  \vert\Lambda\vert^{-1}\sqrt{\langle [O^{(\Lambda)}]^2  \rangle_{\beta,  0}^{(\Lambda)}}$.
More precisely,
$(m_\mathrm{LRO}^{(\Lambda)})^2 = \vert\Lambda\vert^{-1} \langle  \tilde \rho_{Q} \tilde \rho_{Q} \rangle_{\beta,  0}^{(\Lambda)}$.

In order to estimate the first sum in the right-hand side of (\ref{Eq.LRO}), we have to evaluate the double commutator in $C_p$. 
Since $[\rho(x), \rho(y)] = 0$ we have
\[
[\tilde \rho_{p}, [H^{(\Lambda)}(0), \tilde \rho_{-p}]] 
 =
[\tilde \rho_{p}, [H_K^{(\Lambda)}, \tilde \rho_{-p}]],
\]
where $H_K^{(\Lambda)}$ is the free part of the present Hamiltonian $H^{(\Lambda)}(m)$ with the mass parameter $m=0$. 
Using the commutation relations, we see
\begin{align*}
\left\| [\tilde \rho_{p}, [H_K^{(\Lambda)}, \tilde \rho_{-p}]]    \right\|
&\le
4\vert\kappa\vert v \|\psi^\dagger(x) \psi(y) - \psi^\dagger(y)\psi(x)  \|\\
&\le
8\vert\kappa\vert v.
\end{align*}
Hence $C_p \le 8\vert\kappa\vert v$ follows.
Then we have
\begin{align*}
 \frac{1}{\vert\Lambda\vert}\sum_{p \neq Q} \sqrt{\frac{C_p}{2  g E_{p+Q}}}  
 &\le
\frac{2}{\vert\Lambda\vert} \sum_{p \neq Q} \sqrt{\frac{\vert\kappa\vert \nu}{g E_{p+Q}}}.
\end{align*}
From the Plancherel theorem and $\rho(x)^2 = 1/4$, we obtain
\begin{equation}
\label{sumrule}
\sum_{p \in \Lambda^\ast}  \langle \tilde \rho_{p} \tilde \rho_{-p}  
+ \tilde \rho_{-p} \tilde \rho_{p} \rangle_{\beta, 0}^{(\Lambda)}= \frac{\vert\Lambda\vert}{2}.
\end{equation}
Putting all together, we have
\[
\frac{1}{4}
\le \frac{I_\nu}{2\beta g} + \sqrt{\frac{\vert\kappa\vert \nu}{g} }J_\nu 
+\underbrace{\lim_{\Lambda \nearrow \mathbb Z^{\nu}}(m^{(\Lambda)}_\mathrm{LRO})^2}_{=: m_\mathrm{LRO}^2},
\]
with  $I_\nu$ and $J_\nu$ given by
\[
I_\nu := \frac{1}{(2\pi)^\nu}\int_{[-\pi, \pi]^\nu} \frac{dp}{E_p}, \quad
J_\nu := \frac{1}{(2\pi)^\nu}\int_{[-\pi, \pi]^\nu} \frac{dp}{\sqrt{E_p}}.
\]
We note that $I_\nu$ and $J_\nu$ are finite for $\nu \ge 3$.
Hence the long-range order $m_\mathrm{LRO} >0$ exists for sufficiently large $\beta$ and small $\vert\kappa\vert/g$. \qed

\section{Long-range order at zero temperature}
\label{LROzero}

In this section, we prove the existence of long-range order in the ground states \cite{Neves1986LongRO, KLS1, KLS2} 
for dimensions $\nu \ge 2$. 

Let $\{\Psi^{(\Lambda)}_i \}_{i=1}^\mu$ be the ground states for $H^{(\Lambda)}(m)$ with $\mu$-multiplicity.
Then it is known that, as $\beta \to \infty$, the thermal expectation value of any observable $A$ converges to 
the expectation value in the ground states:
\[
\lim_{\beta \to \infty} \langle  A  \rangle_{\beta,  0}^{(\Lambda)}
=
\frac{1}{\mu}\sum_{i=1}^\mu \langle \Psi^{(\Lambda)}_i , A  \Psi^{(\Lambda)}_i  \rangle 
=:
\omega_0^{(\Lambda)}(A ).
\]
Taking $\beta \to \infty$ in (\ref{Eq.LRO}), we have
\[
\frac{1}{4}
\le
\frac{1}{\vert\Lambda\vert} \sum_{p \neq Q} \sqrt{\frac{\vert\kappa\vert \nu}{g E_{p+Q}}}
+ \frac{\omega_0^{(\Lambda)}(\tilde \rho_{Q} \tilde \rho_{Q} )}{\vert\Lambda\vert},
\]
where we have used $C_p \le 8 \vert\kappa\vert \nu$ and the sum rule (\ref{sumrule}).
Hence in the infinite volume limit, we obtain
\[
m^2_\mathrm{GSLRO}
:=
\lim_{\Lambda \nearrow \mathbb Z^\nu} \frac{\omega_0^{(\Lambda)}(\tilde \rho_{Q} \tilde \rho_{Q} )}{\vert\Lambda\vert}
\ge
\frac{1}{4} - \sqrt{\frac{\vert\kappa\vert\nu}{g}} J_\nu.
\]
Since $J_\nu < \infty$ for $\nu \ge 2$, the long-range order $m_\mathrm{GSLRO} >0$ exists for sufficiently small $\vert\kappa\vert/g$. \qed

\appendix 

\section{Chirality and mass for the staggered fermions}
\label{staggeredfermion}

Following \cite{BRSKJSS}, we review the chirality of the free fermion part of the Hamiltonian $H^{(\Lambda)}(m)$ 
of (\ref{HamLambdam}) in the case of the three dimension $\nu=3$. 
{For a more careful mathematical discussion on the staggered fermions, we refer the reader to~\cite{Nakamura}.} 

Let us consider the Hamiltonian of \cite{BRSKJSS} with the mass term which is given by 
\begin{eqnarray}
\label{Hamfree}
H_{\rm free}^{(\Lambda)}&:=&i\kappa \sum_{x\in\Lambda\subset\mathbb{Z}^3}
\Bigl\{[\psi^\dagger(x)\psi(x+e_1)-\psi^\dagger(x+e_1)\psi(x)]\nonumber\\
&+&i[\psi^\dagger(x)\psi(x+e_2)+\psi^\dagger(x+e_2)\psi(x)](-1)^{x^{(1)}+x^{(2)}}\nonumber\\
&+&[\psi^\dagger(x)\psi(x+e_3)-\psi^\dagger(x+e_3)\psi(x)](-1)^{x^{(1)}+x^{(2)}}\Bigr\}\nonumber\\
&+&m\sum_{x\in\Lambda\subset\mathbb{Z}^3}(-1)^{x^{(1)}+x^{(2)}+x^{(3)}}\psi^\dagger(x)\psi(x).
\end{eqnarray}
Here, we impose the anti-periodic boundary conditions, 
\begin{equation*}
\psi(L+1,x^{(2)},x^{(3)})=-\psi(-L+1,x^{(2)},x^{(3)}),
\end{equation*}
\begin{equation*}
\psi(x^{(1)},L+1,x^{(3)})=-\psi(x^{(1)},-L+1,x^{(3)}),
\end{equation*}
and 
\begin{equation*}
\psi(x^{(1)},x^{(2)},L+1)=-\psi(x^{(1)},x^{(2)},-L+1)
\end{equation*}
for the fermion operator $\psi(x)$ at the right boundaries. Similarly, 
\begin{equation*}
\psi(-L,x^{(2)},x^{(3)})=-\psi(L,x^{(2)},x^{(3)}),
\end{equation*}
\begin{equation*}
\psi(x^{(1)},-L,x^{(3)})=-\psi(x^{(1)},L,x^{(3)}),
\end{equation*}
and 
\begin{equation*}
\psi(x^{(1)},x^{(2)},-L)=-\psi(x^{(1)},x^{(2)},L)
\end{equation*}
for the left boundaries. These are equivalent to the anti-periodic boundary conditions for the Hamiltonian $H^{(\Lambda)}(m)$. 
These conditions are realized by the Fourier transform of $\psi(x)$, 
\begin{equation}
\psi(x)=\frac{1}{\sqrt{\vert\Lambda\vert}}\sum_k e^{ikx}\hat{\psi}(k),
\end{equation}
with the momenta $k=(k^{(1)},k^{(2)},k^{(3)})$ which satisfy 
\begin{equation}
\label{APBC}
\exp[ik^{(i)}\cdot 2L]=-1 \quad \mbox{for } i=1,2,3.
\end{equation}

The second term in the summand of the first sum in the right-hand side of (\ref{Hamfree}) 
is different from the corresponding term of the Hamiltonian $H^{(\Lambda)}(m)$ of (\ref{HamLambdam}). 
In order to show that the two Hamiltonians are unitary equivalent to each other, we introduce a unitary transformation, 
\begin{equation}
U_{\rm free}:=\prod_{x\;:\;x^{(2)}={\rm odd}}\exp[-(i\pi/2)\psi^\dagger(x)\psi(x)]. 
\end{equation}
Then, one has 
\begin{equation}
U_{\rm free}^\dagger \psi(x) U_{\rm free}=
\begin{cases}
-i\psi(x) & \mbox{for } x^{(2)}={\rm odd};\\
\psi(x) & \mbox{for } x^{(2)}={\rm even}.
\end{cases}
\end{equation}
Under this transformation, only the second term in the summand of the first sum in the right-hand side of (\ref{Hamfree}) 
changes as follows: 
\begin{eqnarray}
& &U_{\rm free}^\dagger [\psi^\dagger(x)\psi(x+e_2)+\psi^\dagger(x+e_2)\psi(x)]U_{\rm free}\nonumber\\
&=&
\begin{cases}
i\psi^\dagger(x)\psi(x+e_2)-i\psi^\dagger(x+e_2)\psi(x) & \mbox{for } x^{(2)}={\rm odd};\\
-i\psi^\dagger(x)\psi(x+e_2)+i\psi^\dagger(x+e_2)\psi(x) & \mbox{for } x^{(2)}={\rm even}
\end{cases}\nonumber\\
&=&-i(-1)^{x^{(2)}}[\psi^\dagger(x)\psi(x+e_2)-\psi^\dagger(x+e_2)\psi(x)]. 
\end{eqnarray}
Therefore, the transformed Hamiltonian $U_{\rm free}^\dagger H_{\rm free}^{(\Lambda)}U_{\rm free}$ has the desired form, 
which is equal to the free part of the Hamiltonian $H^{(\Lambda)}(m)$ of (\ref{HamLambdam}). 

Since the Hamiltonian $H_{\rm free}^{(\Lambda)}$ of (\ref{Hamfree}) has the quadratic form in the fermion operators $\psi(x)$, 
it can be diagonalized in terms of fermion operators $\mathcal{A}_i$ as follows: 
\begin{equation}
H_{\rm free}^{(\Lambda)}=\sum_i E_i \mathcal{A}_i^\dagger \mathcal{A}_i,
\end{equation}
where $E_i$ are the energy eigenvalues. One has 
\begin{equation}
[H_{\rm free}^{(\Lambda)},\mathcal{A}_i^\dagger]=E_i\mathcal{A}_i^\dagger
\end{equation}
by using the anticommutation relations for $\mathcal{A}_i$. In order to find the solution $\mathcal{A}^\dagger$ 
for the equation $[H_{\rm free}^{(\Lambda)},\mathcal{A}^\dagger]=E\mathcal{A}^\dagger$, we set 
\begin{equation}
\mathcal{A}^\dagger =\sum_{x\in\Lambda} v(x)\psi^\dagger(x),
\end{equation}
where $v(x)$ is a function of the site $x\in\Lambda$. We choose the function $v(x)$ so that $v(x)$ satisfies 
the same anti-periodic boundary conditions as those of $\psi(x)$. 
Substituting this into $[H_{\rm free}^{(\Lambda)},\mathcal{A}^\dagger]=E\mathcal{A}^\dagger$, one has 
\begin{eqnarray}
& &\sum_{x\in\Lambda} [H_{\rm free}^{(\Lambda)},v(x)\psi^\dagger(x)]\nonumber\\
&=&i\kappa\sum_{x\in\Lambda} v(x)\Bigl\{[\psi^\dagger(x-e_1)-\psi^\dagger(x+e_1)]
+i(-1)^{x^{(1)}+x^{(2)}}[-\psi^\dagger(x-e_2)+\psi^\dagger(x+e_2)]\nonumber\\
&+&(-1)^{x^{(1)}+x^{(2)}}[\psi^\dagger(x-e_3)-\psi^\dagger(x+e_3)]\Bigr\}
+m\sum_{x\in\Lambda}v(x)(-1)^{x^{(1)}+x^{(2)}+x^{(3)}}\psi^\dagger(x)\nonumber\\
&=&E\sum_{x\in\Lambda}v(x)\psi^\dagger(x). 
\end{eqnarray}
{From} the coefficients of $\psi^\dagger(x)$, one obtains the eigenvalue equation, 
\begin{eqnarray}
\label{eigenvalueEq}
& &i\kappa\bigl\{[v(x+e_1)-v(x-e_1)]+i(-1)^{x^{(1)}+x^{(2)}}[v(x+e_2)-v(x-e_2)]\nonumber\\
&+&(-1)^{x^{(1)}+x^{(2)}}[v(x+e_3)-v(x-e_3)       \bigr\}+m(-1)^{x^{(1)}+x^{(2)}+x^{(3)}}v(x)
=Ev(x),\nonumber\\
\end{eqnarray}
for the single particle wavefunction $v(x)$. 

For the function $v(x)$, we introduce the Fourier transform, 
\begin{equation}
v(x)=\frac{1}{\sqrt{\vert\Lambda\vert}}\sum_k e^{ikx}\hat{v}(k),
\end{equation}
with the momentum $k=(k^{(1)},k^{(2)},k^{(3)})\in (-\pi,\pi]^3$. Substituting this into the above equation (\ref{eigenvalueEq}), 
one has 
\begin{equation}
\label{eigenvalueEqvk}
-2\kappa\bigl[\sin k^{(1)}\hat{v}(k)+i\sin k^{(2)}\hat{v}(k+\pi_{12})+\sin k^{(3)}\hat{v}(k+\pi_{12})\bigr]
+m\hat{v}(k+\pi_{123})=E\hat{v}(k), 
\end{equation} 
where we have written 
$$
\pi_{12}:=(\pi,\pi,0) \quad \mbox{and}\quad \pi_{123}:=(\pi,\pi,\pi). 
$$
We also write 
$$
\pi_1:=(\pi,0,0), \ \pi_2:=(0,\pi,0), \ \pi_3:=(0,0,\pi)   
$$
and 
$$
\pi_{13}:=(\pi,0,\pi), \ \pi_{23}:=(0,\pi,\pi).
$$
When $m=0$, these momenta as well as $k=0$ yield the zero energy $E=0$. 
Following \cite{BRSKJSS}, we define
\begin{equation}
\label{vuk}
\begin{pmatrix}
\hat{v}_{\rm u}^{(1)}(k)\\ \hat{v}_{\rm u}^{(2)}(k)\\ \hat{v}_{\rm u}^{(3)}(k)\\ \hat{v}_{\rm u}^{(4)}(k)
\end{pmatrix}
:=U_{\rm u}
\begin{pmatrix} \hat{v}(k)\\ \hat{v}(k+\pi_3)\\ \hat{v}(k+\pi_{12})\\ \hat{v}(k+\pi_{123})\end{pmatrix}
\end{equation}
for the {u-}quark, where we have written 
\begin{equation}
\label{Uu}
U_{\rm u}:=\frac{1}{2}\begin{pmatrix}
1 & 1 & 1 & 1 \\ 1 & -1 & -1 & 1 \\ 1 & -1 & 1 & -1 \\ 1 & 1 & -1 & -1 
\end{pmatrix}.
\end{equation}
{Clearly, the decomposition in the above equation (\ref{vuk}) is {justified} only for a small $k$.
However, one can expect that the contributions for large $k$ do not affect the continuum limit. 
In fact, in the sence of the norm for the resolvent of the Dirac operator, the statement was proved by \cite{Nakamura}.}

Similarly, for the {d-}quark, 
\begin{equation}
\begin{pmatrix}
\hat{v}_{\rm d}^{(1)}(k)\\ \hat{v}_{\rm d}^{(2)}(k)\\ \hat{v}_{\rm d}^{(3)}(k)\\ \hat{v}_{\rm d}^{(4)}(k)
\end{pmatrix}
:=U_{\rm d}
\begin{pmatrix}
\hat{v}(k+\pi_1)\\ \hat{v}(k+\pi_2) \\ \hat{v}(k+\pi_{23})\\ \hat{v}(k+\pi_{13})
\end{pmatrix},
\end{equation}
where 
\begin{equation}
U_{\rm d}:=\frac{1}{2}\begin{pmatrix}
-1 & 1 & -1 & 1 \\ -1 & -1 & -1 & -1 \\ 1 & -1 & -1 & 1 \\ 1 & 1 & -1 & -1 
\end{pmatrix}.
\end{equation}
We write
$$
\hat{v}_{\rm f}(k):=\begin{pmatrix}
\hat{v}_{\rm f}^{(1)}(k) \\ \hat{v}_{\rm f}^{(2)}(k) \\ \hat{v}_{\rm f}^{(3)}(k) \\ \hat{v}_{\rm f}^{(4)}(k)
\end{pmatrix}
$$
for ${\rm f}={\rm u,d}$. We also write 
\begin{equation*}
\sigma^{(1)}:=\begin{pmatrix}
0 & 1 \\
1 & 0 
\end{pmatrix},\quad 
\sigma^{(2)}:=\begin{pmatrix}
0 & -i \\
i & 0 
\end{pmatrix},\quad 
\sigma^{(3)}:=\begin{pmatrix}
1 & 0 \\
0 & -1 
\end{pmatrix}, 
\end{equation*}
$$
\sin k:=(\sin k^{(1)},\sin k^{(2)},\sin k^{(3)}),
$$ 
and 
$$
\sin k\cdot \sigma :=\sin k^{(1)}\sigma^{(1)}+\sin k^{(2)}\sigma^{(2)}+\sin k^{(3)}\sigma^{(3)}.
$$
By using these expressions and the eigenvalue equation (\ref{eigenvalueEqvk}), one has 
\begin{equation}
\left[-2\kappa
\begin{pmatrix}
0 & \sin k\cdot \sigma \\
\sin k\cdot \sigma & 0
\end{pmatrix}
+m\gamma_0 \right]\hat{v}_{\rm f}(k)=E\hat{v}_{\rm f}(k) 
\end{equation}
for low momenta $k$, where we have written 
$$
\gamma_0:=\begin{pmatrix}
1 & 0 & 0 & 0\\
0 & 1 & 0 & 0\\
0 & 0 & -1 & 0 \\
0 & 0 & 0 & -1 
\end{pmatrix}.
$$
This is nothing but the desired Dirac equation for small momenta $k$. 

In order to discuss the chirality in the case with $m=0$, we define  
\begin{equation}
\label{gamma5}
\gamma_5:=\begin{pmatrix}
0 & 0 & 1 & 0 \\
0 & 0 & 0 & 1 \\
1 & 0 & 0 & 0 \\
0 & 1 & 0 & 0 
\end{pmatrix}. 
\end{equation}
Then, one has 
\begin{equation}
\Bigl[\gamma_5,\begin{pmatrix} 0 & \sigma^{(i)} \\ \sigma^{(i)} & 0 \end{pmatrix}\Bigr]=0. 
\end{equation}
for all the indices $i=1,2,3$. This implies that $\gamma_5$ is the conserved quantity, and its eigenvalues 
take $\pm 1$, i.e., the chirality. Since $\gamma_5$ does not commute with $\gamma_0$, it is not conserved 
in the case with $m\ne 0$. Actually, one has the following important relation: 
\begin{equation}
\label{gamma50relation}
\gamma_5 \gamma_0 \gamma_5=-\gamma_0.
\end{equation}

As mentioned above, we have ignored the contributions from large momenta $k$.  
Then, in the case of $m=0$, the chirality is conserved. This procedure is justified in the sense of the norm 
for the resolvent of the Dirac operator \cite{Nakamura}. Therefore, we can expect that the free fermion Hamiltonian is 
chiral invariant in the continuum limit with $m=0$, although we will not give the precise proof in the present paper.

\section{Chiral invariance of the interaction Hamiltonian {under a low-energy approximation}}
\label{Sec:chiralint}

In this appendix, we show that the interaction Hamiltonian of $H^{(\Lambda)}(m)$ of (\ref{HamLambdam}) is invariant under 
a certain chiral transformation {under a low-energy approximation}. 
The chiral transformation is discrete in the sense of \cite{BRSKJSS}. 
As mentioned at the end of the preceding section, we have ignored the contributions from large momenta $k$ 
in order to show the chiral invariance of the free fermion Hamiltonian with the vanishing mass parameter, $m=0$. 
For the purpose of the present section, we will also use this approximation. 
We hope that the following argument for the continuum limit will be justified in future studies. 
In passing, we stress that our main results for the phase transition are still mathematically rigorous, 
because we have proved the particle-hole symmetry breaking in the main part of the present paper.

To begin with, we introduce the following Fourier transform: 
\begin{equation}
\label{Fourierpsi}
\psi(x)=\frac{1}{\sqrt{\vert\Lambda\vert}}e^{i\pi x^{(2)}}\sum_k e^{ikx}\hat{\psi}(k). 
\end{equation}
We consider first the free part of the Hamiltonian. 
Substituting the Fourier transform (\ref{Fourierpsi}) into the expression (\ref{Hamfree}) 
of the Hamiltonian $H_{\rm free}^{(\Lambda)}$, one has 
\begin{eqnarray}
H_{\rm free}^{(\Lambda)}
&=&\sum_k \Bigl\{-2\kappa[\sin k^{(1)}\hat{\psi}^\dagger(k)\hat{\psi}(k)
+i\sin k^{(2)}\hat{\psi}^\dagger(k)\hat{\psi}(k+\pi_{12})\nonumber\\
&+&\sin k^{(3)}\hat{\psi}^\dagger(k)\hat{\psi}(k+\pi_{12})]+m\hat{\psi}^\dagger(k)\hat{\psi}(k+\pi_{123})\Bigr\}\nonumber\\
&=&H_{\rm free,u}^{(\Lambda)}+H_{\rm free,d}^{(\Lambda)}, 
\end{eqnarray}
where we have written 
\begin{eqnarray}
H_{\rm free,u}^{(\Lambda)}&=&\sum_k \Bigl\{-\kappa[\sin k^{(1)}\hat{\psi}^\dagger(k)\hat{\psi}(k)
+i\sin k^{(2)}\hat{\psi}^\dagger(k)\hat{\psi}(k+\pi_{12})\nonumber\\
&+&\sin k^{(3)}\hat{\psi}^\dagger(k)\hat{\psi}(k+\pi_{12})]
+\frac{m}{2}\hat{\psi}^\dagger(k)\hat{\psi}(k+\pi_{123})\Bigl\}
\end{eqnarray}
and 
\begin{eqnarray}
H_{\rm free,d}^{(\Lambda)}&=&\sum_k \Bigl\{-\kappa[-\sin k^{(1)}\hat{\psi}^\dagger(k+\pi_1)\hat{\psi}(k+\pi_1)
+i\sin k^{(2)}\hat{\psi}^\dagger(k+\pi_1)\hat{\psi}(k+\pi_{2})\nonumber\\
&+&\sin k^{(3)}\hat{\psi}^\dagger(k+\pi_1)\hat{\psi}(k+\pi_{2})]
+\frac{m}{2}\hat{\psi}^\dagger(k+\pi_1)\hat{\psi}(k+\pi_{23})\Bigl\}. 
\end{eqnarray}
We write 
\begin{equation}
\label{tildePsi}
\tilde{\Psi}_{\rm u}(k):=
\begin{pmatrix}
\hat{\psi}(k) \\ \hat{\psi}(k+\pi_3)\\ \hat{\psi}(k+\pi_{12})\\ \hat{\psi}(k+\pi_{123})
\end{pmatrix}.
\end{equation}
Then, one has 
\begin{eqnarray}
H_{\rm free,u}^{(\Lambda)}
&=&\frac{1}{4}\sum_k \tilde{\Psi}_{\rm u}^\dagger(k)\Bigl\{
-\kappa\Bigl[\sin k^{(1)}
\begin{pmatrix} 
1 & 0 \\ 0 & -1
\end{pmatrix}
+\sin k^{(2)}
\begin{pmatrix}
0 & i \\ -i & 0
\end{pmatrix}
+\sin k^{(3)}
\begin{pmatrix}
0 & \sigma^{(3)} \\ \sigma^{(3)} & 0 
\end{pmatrix}
\Bigr]\nonumber\\
&+&\frac{m}{2}\begin{pmatrix}
0 & \sigma^{(1)}\\ \sigma^{(1)} & 0
\end{pmatrix}
\Bigr\}\tilde{\Psi}_{\rm u}(k).
\end{eqnarray}
Further, we write 
\begin{equation}
\label{hatPsiu}
\hat{\Psi}_{\rm u}(k):=\begin{pmatrix} \hat{\Psi}_{\rm u}^{(1)}(k) \\ \hat{\Psi}_{\rm u}^{(2)}(k) \\
\hat{\Psi}_{\rm u}^{(3)}(k) \\ \hat{\Psi}_{\rm u}^{(4)}(k)\end{pmatrix}
:=U_{\rm u}\tilde{\Psi}_{\rm u}(k) 
\end{equation}
in terms of $U_{\rm u}$ of (\ref{Uu}). Then, the above Hamiltonian has the desired form, 
\begin{equation}
H_{\rm free,u}^{(\Lambda)}=\frac{1}{4}\sum_k \hat{\Psi}_{\rm u}^\dagger(k)\Bigl[
-\kappa \begin{pmatrix}
0 & \sin k\cdot \sigma \\ \sin k\cdot \sigma & 0 
\end{pmatrix}
+\frac{m}{2}\gamma_0\Bigr]\hat{\Psi}_{\rm u}(k). 
\end{equation}
However, the momenta $k$ should be restricted to small values so that the components of the operators $\hat{\Psi}_{\rm u}(k)$ are 
independent {of} each other. 

Similarly, 
\begin{equation}
\label{tildePsid}
\tilde{\Psi}_{\rm d}(k):=\begin{pmatrix}
\hat{\psi}(k+\pi_1) \\ \hat{\psi}(k+\pi_2) \\ \hat{\psi}(k+\pi_{23}) \\ \hat{\psi}(k+\pi_{13}) 
\end{pmatrix}.
\end{equation}
Then, 
\begin{eqnarray}
H_{\rm free,d}^{(\Lambda)}
&=&\frac{1}{4}\sum_k \tilde{\Psi}_{\rm d}^\dagger(k)\Bigl\{
-\kappa\Bigl[\sin k^{(1)}
\begin{pmatrix} 
-\sigma^{(3)} & 0 \\ 0 & \sigma^{(3)}
\end{pmatrix}
+\sin k^{(2)}
\begin{pmatrix}
-\sigma^{(2)} & 0 \\ 0 & \sigma^{(2)}
\end{pmatrix}\nonumber\\
&+&\sin k^{(3)}
\begin{pmatrix}
\sigma^{(1)} & 0 \\ 0 & -\sigma^{(1)} 
\end{pmatrix}
\Bigr]
+\frac{m}{2}\begin{pmatrix}
0 & 1 \\ 1 & 0
\end{pmatrix}
\Bigr\}\tilde{\Psi}_{\rm d}(k).
\end{eqnarray}
Further, we write 
\begin{equation}
\label{hatPsid}
\hat{\Psi}_{\rm d}(k):=\begin{pmatrix} \hat{\Psi}_{\rm d}^{(1)}(k) \\ \hat{\Psi}_{\rm d}^{(2)}(k) \\
\hat{\Psi}_{\rm d}^{(3)}(k) \\ \hat{\Psi}_{\rm d}^{(4)}(k)\end{pmatrix}
:=U_{\rm d}\tilde{\Psi}_{\rm d}(k).  
\end{equation}
These yield the same form of the Hamiltonian,   
\begin{equation}
H_{\rm free,d}^{(\Lambda)}=\frac{1}{4}\sum_k \hat{\Psi}_{\rm d}^\dagger(k)\Bigl[
-\kappa \begin{pmatrix}
0 & \sin k\cdot \sigma \\ \sin k\cdot \sigma & 0 
\end{pmatrix}
+\frac{m}{2}\gamma_0\Bigr]\hat{\Psi}_{\rm d}(k). 
\end{equation}

Now let us consider the interaction term of the Hamiltonian $H^{(\Lambda)}(m)$ of (\ref{HamLambdam}). 
{From} the expression (\ref{Fourierpsi}), one has  
\begin{eqnarray}
\label{HamintFourier}
& &\sum_{x\in\Lambda} \psi^\dagger(x)\psi(x)\psi^\dagger(x+e_\mu)\psi(x+e_\mu)\nonumber\\
&=&\frac{1}{\vert\Lambda\vert^2}\sum_{x\in\Lambda}\sum_{k_1,k_2,k_3,k_4}e^{-ik_1x}e^{ik_2x}e^{-ik_3(x+e_\mu)}e^{ik_4(x+e_\mu)}
\hat{\psi}^\dagger(k_1)\hat{\psi}(k_2)\hat{\psi}^\dagger(k_3)\hat{\psi}(k_4)\nonumber\\
&=&\frac{1}{\vert\Lambda\vert}\sum_{k_1,k_2,k_3,k_4}\delta_{k_1+k_3,k_2+k_4}e^{i(k_4-k_3)e_\mu}
\hat{\psi}^\dagger(k_1)\hat{\psi}(k_2)\hat{\psi}^\dagger(k_3)\hat{\psi}(k_4). 
\end{eqnarray}
In order to treat the region of the low energies, we consider the corresponding momenta,  
$$
k_i=p_i+K_i \ \ \mbox{for } i=1,2,3,4,
$$
where $p_i$ are small momenta, and $K_i$ are given by 
\begin{equation}
\label{Ki}
K_i=(K_i^{(1)},K_i^{(2)},K_i^{(3)})
\end{equation} 
with $K_i^{(j)}\in\{0,\pi\}$ for $i=1,2,3,4$ and $j=1,2,3$. 
The momentum conservation of the Kronecker delta implies 
\begin{equation}
\label{conservemoment}
p_1+p_3=p_2+p_4 \quad \mbox{and}\quad K_1+K_3=K_2+K_4. 
\end{equation}

Let $K=(K^{(1)},K^{(2)},K^{(3)})$ for $K^{(i)}\in\{0,\pi\}$, $i=1,2,3$. 
We define a signature ${\rm sgn}(K)$ for $K$ as follows: 
\begin{equation}
{\rm sgn}(K):=(-1)^{\iota(K^{(1)})+\iota(K^{(2)})+\iota(K^{(3)})},
\end{equation}
where 
\begin{equation}
\iota(K^{(i)}):=\begin{cases}
1, & \mbox{if} \ K^{(i)}=\pi;\\
0, & \mbox{if} \ K^{(i)}=0
\end{cases}
\end{equation}
for $i=1,2,3$. 
Then, the second relation of the momentum conservation law (\ref{conservemoment}) implies 
\begin{equation}
\label{K4relation}
{\rm sgn}(K_1){\rm sgn}(K_2){\rm sgn}(K_3){\rm sgn}(K_4)=+1.
\end{equation}

Having these observations in mind, let us consider the chiral transformation $\gamma_5$. 
{From} (\ref{Uu}), (\ref{gamma5}), (\ref{tildePsi}) and (\ref{hatPsiu}), we have 
\begin{eqnarray}
U_{\rm u}^\dagger \gamma_5 U_{\rm u}\tilde{\Psi}_{\rm u}&=&
\begin{pmatrix}
1 & 0 & 0 & 0 \\ 0 & -1 & 0 & 0 \\ 0 & 0 & 1 & 0 \\ 0 & 0 & 0 & -1 
\end{pmatrix}
\begin{pmatrix}
\hat{\psi}(k) \\ \hat{\psi}(k+\pi_3)\\ \hat{\psi}(k+\pi_{12})\\ \hat{\psi}(k+\pi_{123})
\end{pmatrix}\nonumber\\
&=&
\begin{pmatrix}
\hat{\psi}(k) \\ -\hat{\psi}(k+\pi_3)\\ \hat{\psi}(k+\pi_{12})\\ -\hat{\psi}(k+\pi_{123})
\end{pmatrix}
=\begin{pmatrix}
(\gamma_5\hat{\psi})(k) \\ (\gamma_5\hat{\psi})(k+\pi_3)\\ (\gamma_5\hat{\psi})(k+\pi_{12})\\ (\gamma_5\hat{\psi})(k+\pi_{123})
\end{pmatrix},  
\end{eqnarray}
where we have written 
\begin{equation}
\label{gamma5psi}
(\gamma_5\hat{\psi})(k+K)={\rm sgn}(K)\hat{\psi}(k+K)
\end{equation}
Similarly, 
\begin{equation}
U_{\rm d}^\dagger \gamma_5 U_{\rm d}\tilde{\Psi}_{\rm d}=
\begin{pmatrix}
-\hat{\psi}(k+\pi_1) \\ -\hat{\psi}(k+\pi_2) \\ \hat{\psi}(k+\pi_{23}) \\ \hat{\psi}(k+\pi_{13})
\end{pmatrix}
=
\begin{pmatrix}
(\gamma_5\hat{\psi})(k+\pi_1) \\ (\gamma_5\hat{\psi})(k+\pi_2) \\ (\gamma_5\hat{\psi})(k+\pi_{23}) \\ 
(\gamma_5\hat{\psi})(k+\pi_{13})
\end{pmatrix}.
\end{equation}

The operator in the right-hand side of the interaction Hamiltonian (\ref{HamintFourier}) is transformed 
under the transformation $\gamma_5$ as follows: 
\begin{eqnarray}
& &\hat{\psi}^\dagger(p_1+K_1)\hat{\psi}(p_2+K_2)\hat{\psi}^\dagger(p_3+K_3)\hat{\psi}(p_4+K_4)\nonumber\\
&\rightarrow&  (\gamma_5\hat{\psi})^\dagger(p_1+K_1)(\gamma_5\hat{\psi})(p_2+K_2)(\gamma_5\hat{\psi})^\dagger(p_3+K_3)
(\gamma_5\hat{\psi})(p_4+K_4)\nonumber\\
&=&{\rm sgn}(K_1){\rm sgn}(K_2){\rm sgn}(K_3){\rm sgn}(K_4)
\hat{\psi}^\dagger(p_1+K_1)\hat{\psi}(p_2+K_2)\hat{\psi}^\dagger(p_3+K_3)\hat{\psi}(p_4+K_4)\nonumber\\
&=&\hat{\psi}^\dagger(p_1+K_1)\hat{\psi}(p_2+K_2)\hat{\psi}^\dagger(p_3+K_3)\hat{\psi}(p_4+K_4),
\end{eqnarray}
where we have used (\ref{K4relation}) and (\ref{gamma5psi}). Thus, the interaction Hamiltonian is invariant 
under the discrete chiral transformation $\gamma_5$ when the energies are restricted to the low values. 
This implies that, under the same restriction, the total Hamiltonian without the mass term is invariant 
under the discrete chiral transformation $\gamma_5$.   

\section{Chiral symmetry breaking in an infinite-volume and continuum limit}
\label{Dlimit}

In this appendix, we discuss the infinite-volume and continuum limit. 
For the present Hamiltonian, we set $m=0$, i.e., we consider the case without the mass term in the Hamiltonian. 
We also assume that the contributions from the high energies can be neglected in the continuum limit. 
{Namely, we will continue to use the low-energy approximation. 
Therefore, the discussion here is fairly formal, but we believe it will be helpful to the reader for comprehending 
the physics in the infinite-volume and continuum limit. In particular, the form (\ref{StandardInt}) of an interaction below 
is standard, although one might think that interactions generally exhibit a very complicated form in the representation 
by using the staggered fermions.}
\subsection{Chiral symmetry breaking}

We write $\langle\cdots\rangle$ for the expectation value in the infinite-volume and continuum limit. 
Let us deal with the case that the expectation value $\langle\cdots\rangle$ is invariant under the discrete chiral 
transformation $\gamma_5$. Consider the two-point correlation function,
\begin{equation}
\langle \hat{\psi}^\dagger(p_1+K_1)\hat{\psi}(p_2+K_2)\rangle. 
\end{equation}
{From} the argument of the preceding Appendix~\ref{Sec:chiralint}, we obtain the following result: 
When ${\rm sgn}(K_1){\rm sgn}(K_2)=-1$, this expectation value is vanishing under the above assumption. 
Clearly, generic $n$-point correlation functions have the same property as well.   

Similarly, from the relation (\ref{gamma50relation}), one can show that the expectation value of the mass term of the Hamiltonian 
is also vanishing under the same assumption on the expectation value. 
Conversely, if the expectation value of the mass term shows a non-vanishing value, 
then the chiral symmetry must be broken.  

\subsection{The interaction Hamiltonian in an infinite-volume and continuum limit}

If one expresses the interactions in the right-hand of (\ref{HamintFourier}) in terms of 
the operators of the {u- and d-} quarks, then the expressions become very complicated. 
The reason is that the interaction processes allow the momentum transfer given by (\ref{Ki}).  
In order to clarify this point, we consider more general interactions which have the following form: 
\begin{equation}
\label{generalIntG}
\sum_{x,y\in\Lambda}\psi^\dagger(x)\psi(x)G^{(\Lambda)}(x-y)(-1)^{x^{(1)}+x^{(2)}+x^{(3)}+y^{(1)}+y^{(2)}+y^{(3)}}
\psi^\dagger(y)\psi(y),
\end{equation}
where the function $G^{(\Lambda)}(x)$ of the site $x\in\Lambda$ is given by 
\begin{equation}
\label{G}
G^{(\Lambda)}(x):=\frac{1}{\vert\Lambda\vert}\sum_k \hat{G}(k)e^{ikx}
\end{equation}
with a real-valued function $\hat{G}(k)$ of the momentum $k$. 
Here, we require the condition ${\rm supp}\; \hat{G}(k)\subset [-\varepsilon_0,\varepsilon_0]^3$ 
with a small positive $\varepsilon_0$. This condition forbids the momentum transfer given by (\ref{Ki}) 
in the processes of the interaction.  
Moreover, because of the anti-periodic boundary condition (\ref{APBC}) for the momentum $k$, 
we restrict the argument $x-y$ of the function $G^{(\Lambda)}(x-y)$ in (\ref{generalIntG}) 
into the region $\vert x^{(i)}-y^{(i)}\vert\le L$ for $i=1,2,3$ by relying on the periodic boundary condition of 
the lattice $\Lambda$. 

For the nearest-neighbor interaction given by (\ref{HamintFourier}), the function $\hat{G}(k)$ is given by 
\begin{equation}
\hat{G}(k)=\cos k^{(1)} +\cos k^{(2)} +\cos k^{(3)}.
\end{equation}
Clearly, this does not satisfy the above condition ${\rm supp}\; \hat{G}(k)\subset [-\varepsilon_0,\varepsilon_0]^3$.  

By using the Fourier transform (\ref{G}) of $G^{(\Lambda)}(x)$, we have 
\begin{eqnarray}
\label{4psiGstaggered}
& &\sum_{x,y\in\Lambda}\psi^\dagger(x)\psi(x)G^{(\Lambda)}(x-y)(-1)^{x^{(1)}+x^{(2)}+x^{(3)}+y^{(1)}+y^{(2)}+y^{(3)}}
\psi^\dagger(y)\psi(y)\nonumber\\
&=&\frac{1}{\vert\Lambda\vert}\sum_k\hat{G}(k)\sum_x\psi^\dagger(x)\psi(x)e^{ikx}(-1)^{x^{(1)}+x^{(2)}+x^{(3)}}\nonumber\\
& &\times \sum_y \psi^\dagger(y)\psi(y)e^{-iky}(-1)^{y^{(1)}+y^{(2)}+y^{(3)}}.
\end{eqnarray}
Further, by using the expression (\ref{Fourierpsi}) of $\psi(x)$, we have 
\begin{eqnarray}
\sum_x\psi^\dagger(x)\psi(x)e^{ikx}(-1)^{x^{(1)}+x^{(2)}+x^{(3)}}
&=&\frac{1}{8}\sum_p\bigl[ \hat{\Psi}_{\rm u}^\dagger(k+p)\gamma_0\hat{\Psi}_{\rm u}(p)
+\hat{\Psi}_{\rm d}^\dagger(k+p)\gamma_0\hat{\Psi}_{\rm d}(p)\bigr]\nonumber\\
&=&\frac{1}{8}\sum_p\hat{\Psi}(k+p)\gamma_0\hat{\Psi}(p)
\end{eqnarray}
in {the} same way that we treated the mass term of the Hamiltonian, where we have written 
\begin{equation}
\hat{\Psi}(k):=\begin{pmatrix}
\hat{\Psi}_{\rm u}(k) \\ \hat{\Psi}_{\rm d}(k)
\end{pmatrix}.
\end{equation}
Therefore, by substituting this expression into 
the right-hand side in the above equation (\ref{4psiGstaggered}), the right-hand side is written 
\begin{equation}
\frac{1}{\vert\Lambda\vert}\sum_{k,p,q}\hat{G}(k)\hat{\Psi}^\dagger(k+p)\gamma_0\hat{\Psi}(p)\hat{\Psi}^\dagger(q)\gamma_0\hat{\Psi}(q+k),
\end{equation}
where we have dropped the prefactor. Therefore, the formal infinite-volume and continuum limit is given by 
\begin{equation}
\int_{\mathbb{R}^3} dx^{(1)}dx^{(2)}dx^{(3)}
\int_{\mathbb{R}^3} dy^{(1)}dy^{(2)}dy^{(3)}\;\Psi^\dagger(x)\gamma_0\Psi(x)G(x-y)\Psi^\dagger(y)\gamma_0\Psi(y), 
\end{equation}
where $\Psi(x)$ is the inverse Fourier transform of $\hat{\Psi}(k)$, and the function $G(x)$ is the corresponding limit of 
$G^{(\Lambda)}(x)$. 

When we take the continuum limit, the region $[-\varepsilon_0,\varepsilon_0]^3$ of 
the support condition ${\rm supp}\; \hat{G}(k)\subset [-\varepsilon_0,\varepsilon_0]^3$
for the function $\hat{G}(k)$ is replaced by $[-\varepsilon_0/a,\varepsilon_0/a]^3$ with a small lattice constant $a>0$.
Clearly, one can take $\varepsilon_0$ and $a$ so that $\varepsilon_0/a$ takes a large value.  
This implies that one can take the function $\hat{G}(k)$ so that it has a large support. 
Therefore, the function $G(x)$ in the real space can be taken to approach to the delta function in a sequence. 
Namely, $G_n(x)\rightarrow \delta(x)$ as $n\rightarrow\infty$. 
In this formal limit, the above interaction has the form, 
\begin{equation}
\label{StandardInt}
\int_{\mathbb{R}^3} dx^{(1)}dx^{(2)}dx^{(3)}\;[\Psi^\dagger(x)\gamma_0\Psi(x)]^2.
\end{equation}
This is a desired form of the interaction.

\subsection*{Acknowledgements} 
Y. G. is grateful to Tetsuo Hatsuda and Masaru Hongo for suggesting the problem of proving chiral symmetry breaking and for many helpful discussions on QCD.
Partial financial support from the RIKEN iTHEMS Mathematical Physics Working Group (Y. G.) and the Science and Technology Hub Collaborative Research Program from RIKEN Cluster for Science, Technology and Innovation Hub (Y. G.) are gratefully acknowledged.



\begin{thebibliography}{0}



\bibitem{Aoki}
S.~Aoki, Y.~Aoki, H.~Fukaya, S.~Hashimoto, C.~Rohrhofer, K.~Suzuki,
What is chiral susceptibility probing?,
{\it PoS \/} {\bf LATTICE2021}, 050 (2022).

\bibitem{BRSKJSS} T.~Banks, S.~Raby, L.~Susskind, J.~Kogut, D.~R.~T.~Jones, P.~N.~Scharbach, D.~K.~Sinclair, 
Strong-coupling calculations of the hadron spectrum of quantum chromodynamics, 
{\it Phys. Rev. D\/} {\bf 15}, 1111--1127 (1977). 

\bibitem{BCS} J.~Bardeen, L.~N.~Cooper and J.~R.~Schrieffer,
Theory of Superconductivity, 
{\it Phys. Rev.\/} {\bf 108}, 1175--1204 (1957). 

\bibitem{DFF1}
N.~Datta, R.~Fern\'andez, and J.~Fr\"ohlich, 
Low-temperature phase diagrams of quantum lattice systems. I. Stability for quantum perturbations of classical systems with finitely-many ground states,
{\it J. Stat. Phys.\/} {\bf 84}, 455--534 (1996).

\bibitem{DFF2}
N.~Datta, R.~Fern\'andez, and  J.~Fr\"ohlich, 
Effective Hamiltonians and Phase Diagrams for Tight-Binding Models,
{\it J. Stat. Phys.\/} {\bf 96}, 545--611 (1999).

\bibitem{DLS} F.~J.~Dyson, E.~H.~Lieb and B.~Simon, 
 Phase Transitions in Quantum Spin Systems with Isotropic and Nonisotropic Interactions, 
{\it J. Stat. Phys.\/} {\bf 13}, 335--383 (1978).

\bibitem{FBU}
J.~Fr\"ohlich, L.~Rey-Bellet, and D.~Ueltschi,
 Quantum Lattice Models at Intermediate Temperature,
 {\it Commun. Math. Phys.\/} {\bf 224}, 33--63 (2001).

\bibitem{FILS} J.~Fr\"ohlich, R.~B.~Israel, E.~H.~Lieb and B.~Simon, 
Phase Transitions and Reflection Positivity. II. Lattice Systems with Short-Range and Coulomb Interactions, 
{\it J. Stat. Phys.\/} {\bf 22}, 297--347 (1980).  

\bibitem{FSS} J.~Fr\"ohlich, B.~Simon, and T.~Spencer, 
Infrared bounds, phase transitions and continuous symmetry breaking,  
{\it Commun. Math. Phys.\/} {\bf 50}  79--95 (1976).

\bibitem{Goldstone} J.~Goldstone, 
Field Theories with $\langle\!\langle$Superconductor$\rangle\!\rangle$ Solutions, 
{\it Nuovo Cimento\/} {\bf 19} 154--164 (1961) . 

\bibitem{GSW} J.~Goldstone, A.~Salam, and S.~Weinberg, 
 Broken Symmetries, 
{\it Phys. Rev.\/} {\bf 127} (1962) 965--970. 

\bibitem{GSST} J.~E.~Gubernatis, D.~J.~Scalapino, R.~L.~Sugar, and W.~D.~Toussaint,
Two-dimensional spin-polarized fermion lattice gases, 
{\it Phys. Rev. B\/} {\bf 32}, 103--116 (1985). 

\bibitem{Griffiths}
R.~B.~Griffiths,
Spontaneous magnetization in idealized ferromagnets,
{\it Phys. Rev.\/} {\bf 152}, 240--246 (1966).

\bibitem{Hatsuda} T.~Hatsuda, T.~Kunihiro, 
QCD Phenomenology based on a Chiral Effective Lagrangian, 
{\it Phys. Rept.\/} {\bf 247}, 221--367 (1994), arXiv:hep-ph/9401310.  

\bibitem{JP} A.~Jaffe and F.~L.~Pedrocchi, 
Reflection Positivity for Majoranas,
{\it Ann. Henri Poincar\'e\/}, {\bf 16}(1), 189--203 (2015). 


\bibitem{KSm} L.~H.~Karsten and J.~Smit,
Lattice fermions: Species doubling, chiral invariance and the triangle anomaly,
{\it Nucl. Phys. B\/} {\bf 183}, 103--140 (1981).


\bibitem{KLS1}
T.~Kennedy, E.~H.~Lieb, and B.~S.~Shastry,
Existence of N\'eel order in some spin-1/2 Heisenberg antiferromagnets,
{\it J. Stat. Phys. 53\/}, {\bf 1019} (1988).

\bibitem{KLS2} T.~Kennedy, E.~H.~Lieb, and B.~S.~Shastry,
The XY Model Has Long-Range Order for All Spins and All Dimensions Greater than One,
{\it Phys. Rev. Lett.\/} {\bf 61}, 2582 (1988).

\bibitem{KennedyTasaki} T.~Kennedy, H.~Tasaki, 
Hidden Symmetry Breaking and the Haldane Phase in $S=1$ Quantum Spin Chains, 
{\it Commun. Math. Phys.\/} {\bf 147}, 431--484 (1992).


\bibitem{KS} J.~Kogut, L.~Susskind, 
Hamiltonian formulation of Wilson's lattice gauge theories,
{\it Phys. Rev. D\/}, {\bf 11} 395--408 (1975). 

\bibitem{KT}
T.~Koma, H.~Tasaki, 
Symmetry breaking in Heisenberg antiferromagnets,
{\it Commun. Math. Phys.\/} {\bf 158}, 191--214 (1993).

\bibitem{Koma4} T.~Koma, 
 Nambu--Goldstone Modes for Superconducting Lattice Fermions, Preprint, arXiv:2201.13135. 

\bibitem{Koma5} T.~Koma, 
$\pi$ Flux Phase and Superconductivity for Lattice Fermions Coupled to Classical Gauge Fields, Preprint, arXiv:2205.00835. 

\bibitem{Kronfeld}
A.~S.~Kronfeld,
Lattice Gauge Theory with Staggered Fermions: How, Where, and Why (Not),
{\it PoS \/} {\bf LATTICE2007}, 016 (2007).

\bibitem{LM}
P.~Lemberger and N.~Macris,
Long-range order in a simple model of interacting fermions,
{\it Lett. Math. Phys.\/} {\bf 28}, 295--305 (1993).

\bibitem{LJY1} Zi-Xiang Li, Yi-Fan Jiang, and Hong Yao, 
Solving fermion sign problem in quantum Monte Carlo by Majorana representation, 
{\it Phys. Rev. B\/} {\bf 91}, 241117 (2015), arXiv:1408.2269. 

\bibitem{LJY2} Zi-Xiang Li, Yi-Fan Jiang, and Hong Yao,
Fermion-sign-free Majorana-quantum-Monte-Carlo studies of quantum critical
phenomena of Dirac fermions in two dimensions, 
{\it New J. Phys.\/} {\bf 17}, 085003 (2015), arXiv:1411.7383. 

\bibitem{LiebFlux} E.~H.~Lieb, 
The Flux-Phase of the Half-Filled Band, 
{\it Phys. Rev. Lett.\/}{\bf 73}, 2158--2161 (1994). 

\bibitem{LN} E.~H.~Lieb and B.~Nachtergaele, 
 Stability of the Peierls instability for ring-shaped molecules, 
{\it Phys. Rev. B\/} {\bf 51}, 4777--4791 (1995).

\bibitem{MN} N.~Macris and B.~Nachtergaele, 
 On the flux phase conjecture at half-filling: an improved proof, 
{\it J. Stat. Phys.\/} {\bf 85},  745--761 (1996).  

\bibitem{Nakamura}  {S.~Nakamura,
Remarks on discrete Dirac operators and their continuum limits,
Preprint, arXiv:2306.14180}

\bibitem{Nambu} Y.~Nambu, 
Axial Vector Current Conservation in Weak Interactions,
{\it Phys. Rev. Lett.\/} {\bf 4} 380--382 (1960) .

\bibitem{NJL} Y.~Nambu, and G.~Jona-Lasinio, 
 Dynamical Model of Elementary Particles Based on 
an Analogy with Superconductivity. I, 
{\it Phys. Rev.\/} {\bf 122}  345--358 (1961).

\bibitem{NJL2} Y.~Nambu, and G.~Jona-Lasinio, 
Dynamical Model of Elementary Particles Based on an Analogy with Superconductivity. II, 
{\it Phys. Rev.\/} {\bf 124}, 246--254 (1961). 

\bibitem{Neves1986LongRO}
E.~J.~Neves and J.~Perez,
Long range order in the ground state of two-dimensional antiferromagnets,
 {\it Phys. Lett. A\/} {\bf 114}, 331--333 (1986).
 
 \bibitem{Rothe}  {H.~J.~Rothe,
  Lattice Gauge Theories: An Introduction (Fourth Edition),
World Scientific Lecture Notes in Physics, {\bf Vol.~82} World Scientific, (2012).
 }
 
 \bibitem{SS1}
 M.~Salmhofer, E.~Seiler,
  Proof of chiral symmetry breaking in strongly coupled lattice gauge theory,
  {\it Commun. Math. Phys.\/} {\bf 139}, 395--431 (1991).
  
   
 \bibitem{SS2}
 M.~Salmhofer, E.~Seiler,
 Proof of chiral symmetry breaking in lattice gauge theory,
  {\it Lett. Math. Phys.\/} {\bf 21}, 13--21 (1991).
 
\bibitem{Susskind} L.~Susskind, 
Lattice fermions, 
{\it Phys. Rev. D\/} {\bf 16}, 3031--3039 (1977). 

 \bibitem{Ueltschi}
 D.~Ueltschi, 
 Analyticity in Hubbard Models,
 {\it J. Stat. Phys.\/} {\bf 95}, 693--717 (1999). 


\end{thebibliography}
\end{document}